\documentclass[12pt,oneside]{book}
\usepackage{amsmath,amssymb}
\usepackage{amsthm} 

\theoremstyle{plain}
\newtheorem{theorem}{Theorem}
\theoremstyle{definition}
\newtheorem{definition}{Definition} 

\usepackage{graphicx}   
\usepackage{stackengine} 
\usepackage{scalerel}    
\stackMath


%
\setlength\unitlength{1mm}

\long\def\comment#1{}

\newcommand{\be}{\begin{equation}}
\newcommand{\ee}{\end{equation}}
 
\newtheorem{Thm}{Theorem}          
\newtheorem{corollary}[Thm]{Corollary}   

\newtheorem{remark}{Remark} 
\newtheorem{prop}{Proposition}
 
\newtheorem{assumption}{Assumption}

\newfont{\bbb}{msbm10 scaled 700}

\newfont{\bb}{msbm10 scaled 1100}


\newcommand{\xv}{{\bf x}}






\newcommand{\RNum}[1]{\uppercase\expandafter{\romannumeral #1\relax}}



\newcommand{\eqdef}{\stackrel{\Delta}{=}}

\newcommand{\<}{\left\langle}
\renewcommand{\>}{\right\rangle}



\usepackage{dsfont}
\usepackage{amsmath}
\DeclareMathOperator*{\argmin}{arg\,min}

\usepackage[margin=1in]{geometry}
\usepackage{hyperref}
\usepackage{circuitikz}
\usepackage{booktabs}
\usepackage{amssymb}

\usepackage{graphicx}
\usepackage{epstopdf}
\usepackage{multirow}
\usepackage{cite}

\title{\textbf{Technical Report for Dissipativity Learning in Reproducing Kernel Hilbert Space}}
\author{Xiuzhen Ye, Wentao Tang}
\date{\today}

\begin{document}
\maketitle   
This work presents a nonparametric framework for \emph{dissipativity learning} in reproducing kernel Hilbert spaces (RKHS), which enables data-driven certification of stability and performance properties for unknown nonlinear systems without requiring an explicit dynamic model. Dissipativity is a fundamental system property that generalizes Lyapunov stability, passivity, and finite $\mathcal{L}_2$-gain conditions through an energy-balance inequality between a storage function and a supply rate. Unlike prior parametric formulations that approximate these functions using quadratic forms with fixed matrices, the proposed method represents them as Hilbert–Schmidt operators acting on canonical kernel features, thereby capturing nonlinearities implicitly while preserving convexity and analytic tractability. The resulting operator optimization problem is formulated in the form of a one-class support vector machine (OC-SVM) and reduced, via the representer theorem, to a finite-dimensional convex program expressed through kernel Gram matrices. Furthermore, statistical learning theory is applied to establish generalization guarantees, including confidence bounds on the dissipation rate and the $\mathcal{L}_2$-gain. Numerical results demonstrate that the proposed RKHS-based dissipativity learning method effectively identifies nonlinear dissipative behavior directly from input–output data, providing a powerful and interpretable framework for model-free control analysis and synthesis.

\section{Introduction}\label{sec_introduction}
Data-driven techniques, where statistical and machine learning methods are utilized for modelling and extracting system properties, are the main trending of the research on control systems. This stems from the fact that the potential of easing the derivation and maintenance of dynamics models~\cite{H_IS_13}. In model-based data-driven approaches, research effort has been dedicated to the approaches of state-space model construction~\cite{L_SystemIdentification_98}, parameter estimation~\cite{KK_JohgWiley_95}, Koopman operators~\cite{strasser_TAC_24}~\cite{Kostic_ANIPS_22}, or neural networks~\cite{GH_SpringerScience}, etc. In these approaches, challenges persist in developing models that accurately capture the underlying system dynamics, which consequently leaves the subsequent control design performance unguaranteed. 
In general, the state-of-the-art model-based approaches still involve model identification scheme and are not truly model independent.

With this consideration, instead of learning system dynamics, we are motivated to develop a {\it model free} approach that seeks for essential control-relevant property. The rationale of learning system properties lies in its simplicity and sophistication compared to learning a full dynamic model, while maintaining a more direct connection to the resulting control performance. Specifically, the model-free system property learning approach that this work focuses on is {\it dissipativity learning}, proposed in the author's earlier works~\cite{TD_ACC_19}~\cite{TD_CCE_19}~\cite{TD_SCL_21}.
Dissipativity is a fundamental property of nonlinear dynamics~\cite{LB_Dissipative_13}, describing system-wide behavior in the form of a Lyapunov-like inequality. Particularly, it refers to the existence of a state-dependent {\it storage function} whose accumulation is upper-bounded by a {\it supply rate} that depends on the system’s inputs and outputs~\cite{ZJ_Springer_20}. Since transport phenomena are governed by the second law of thermodynamics, dissipativity is known to be a common characteristic of process systems~\cite{BL_Springer_07}.  

In the author’s previous works~\cite{TD_CCE_19}~\cite{TD_SCL_21}, dissipativity learning control (DLC) was proposed as a data-driven control framework, where the dissipativity property of unknown nonlinear dynamics is learned from input-output trajectories. In these works, the supply rate to be estimated from data is parameterized as a quadratic form, so that the matrix defining the quadratic form is to be learned. Naturally, quadratic forms are suitable for systems that are linear or approximately linear, which restricts their applicability to processes that exhibit severe nonlinearity. In addition, the representation of system-wide behavior by the collected trajectories often rely on restrictive assumptions and the complexity of sampling input-output trajectories is prohibitive. 

Therefore, in this work, we formulate dissipativity learning in nonparametric framework in reproducing kernel Hilbert spaces (RKHS)~\cite{BT_Springer_11}~\cite{PR_Cambridge_16}. The information of inputs, outputs, and states is expressed in terms of their canonical features corresponding to the kernel functions in RKHS via kernel feature embeddings~\cite{GG_arXiv_21}. Such canonical features maps data samples nonlinearly into corresponding RKHSs, thereby accounting for the nonlinearity of the dynamics implicitly. Hence, nonlinear functions in the original state-space are represented by the Hilbert-Schmidt (H-S) operators acting on kernel feature maps. This is a generalization of the previous works~\cite{TD_CCE_19}~\cite{TD_SCL_21} that adopt quadratic forms of dissipativity and the matrices parameterizing these quadratic forms are learned. With kernel feature embeddings, the functions remains quadratic in the features to preserve convexity and analytic tractability, while nonlinearity is captured for the dissipative behavior. 

The dissipativity learning problem is hence formulated as an operator convex optimization problem with constraints that enforce dissipativity and stability properties (e.g., the $L_2$ gain). Formally, by extending inner product and norm concepts from matrices to operators, the learning of H-S operators preserves the formulation from finite-dimensional spaces and has the form of one-class support vector machine (OC-SVM)~\cite{OCSVM_ACM_13}. Due to the representation theorem, our formulation is reducible to a tractable, finite-dimensional problem, despite the infinite-dimensional nature. Therefore, the proposed approach enables flexible learning of dissipative properties in general nonlinear forms, without requiring explicit knowledge of the system dynamics. 

For a theoretical analysis, we adopt statistical learning theory to establish bounds on generalized performance, namely a confidence lower bound on the dissipation rate and a confidence upper bound on the guaranteed $L_2$ gain. This work exemplifies the use of operators and kernel tricks in RKHS as generic tools for learning the system behaviors.  

The remainder of this paper is organized as follows. In Section~\ref{sec_DissipativitySystems}, we first introduce the preliminaries of dissipative systems theory. Then in Section~\ref{sec_parameterlearning}, data-driven modelling and estimation of dissipativity property via OC-SVM are disscused. Then, in Section~\ref{sec_RKHS}, we derived the optimization formulations for learning in RKHS that is solved by OC-SVM, followed with generalized error analysis in Section~\ref{sec_ErrorAnalysis}. The proposed method is examined with application to systems with complex nonlinear dynamics in Section~\ref{sec_simulation}. Finally conclusions are given in Section~\ref{sec_conclusion}.

{\it Notations}. Upper case letters are used to represent matrices and linear operators. Lower case letters are for scalars and column vectors. For a matrix $A \in \mathbb{R}^{n \times n}$ whose entries are written as $A_{i,j}, i, j \in \{1,2,\cdots,n\}$, its trace and Frobenius norm are denoted by $\textnormal{tr}(A) = \sum_{i=1}^n A_{i,i}$ and $\|A\|_F = \left(\sum_{i, j = 1}^n |A_{i,j}|^2\right)^{1/2}$, respectively. We use $\langle A, B\rangle \eqdef \textnormal{tr}(A^{\sf T} B)$ to denote the inner product between two matrices $A, B \in \mathbb{R}^{n \times n}$, where $A^{\sf T}$ is the transpose of the matrix $A$. Identity matrix is denoted by $I$. Calligraphic letters are used for sets (e.g., ${\cal X}$) or Hilbert Space (e.g., ${\cal H}$). Moreover, a Reproducing Kernel Hilbert Space (RKHS) generated by kernel $K$ is denoted by ${\cal H}_K$.
Capitalized Greek letters (e.g., $\Pi$) are used to denote operators, and its Hilbert-Schmidt norm is $\| \Pi \|_{\textnormal{HS}}$. We use also $a \otimes b$ to denote Kronecker product between $a$ and $b$. Fourier transform of a function $\cdot $ is ${\cal F}\{\cdot\}$ and its inverse Fourier transform is ${\cal F}^{-1}\{\cdot\}$.

\section{Dissipative Systems}\label{sec_DissipativitySystems}
Consider an unknown system governed by discrete time nonlinear dynamics 
\begin{equation}\label{eq_system}
\begin{aligned}
\begin{cases}
&x_{t+1} = f(x_{t}, u_{t}) \\
&y_{t} = h(x_{t}, u_{t}),
\end{cases}
\end{aligned}
\end{equation}
where $x_{t} \in \mathbb{R}^{n_x}$, $u_{t} \in \mathbb{R}^{n_u}$, and $y_{t} \in \mathbb{R}^{n_y}$ denote the state, input and the output variables of the system at time $t \geq 0$, respectively.   
The maps $f: {\cal X} \times {\cal U} \to \mathbb{R}^{n_x}$ is the drift and 
$h: {\cal X} \times {\cal U} \rightarrow \mathbb{R}^{n_y} $ is the output function. It is assumed that $\cal X$ is assumed to be compact and $f, h$ are locally Lipschitz in $(x,u) \in {\cal X} \times {\cal U}$. We also assume $f(0) = 0$, that is, the origin is an equilibrium point of~\eqref{eq_system}. 
For an initial condition $x_0 \in \mathbb{R}^{n_x}$, we assume the existence and uniqueness of the solution of~\eqref{eq_system}, and denote the solution at time $t$ by $x(t; x_0)$. 
 
Through out this paper, the system dynamics $f$ is unknown, whereas, the data samples $x \in {\cal X} \subset \mathbb{R}^{n_x}, u \in {\cal U} \subset \mathbb{R}^{n_u}$, and $y \in { \cal Y} \subset \mathbb{R}^{n_y}$ are available. The goal is to learn the properties of the system. Therefore, the learning approach relies only on the data sample of the system. Specifically, we focus on the dissipativity property of a system. Dissipativity is a universal property of physical and engineering systems, capturing the balance between storage function $V: \mathbb{R}^{n_x} \to \mathbb{R}$ and external supply rate $s: \mathbb{R}^{n_y} \times \mathbb{R}^{n_u} \to \mathbb{R}$. 
From a control-theoretic perspective, dissipativity links directly to stability, performance, and controller synthesis.

Specifically, defined by Willems in~\cite{Def_diss}, the change of the state dependent function (or {\it storage function}) $V(x)$ can not exceed the accumulation of an input and output dependent function (or {\it supply rate}) $s(z)$. 
More precisely, the dissipativity property is stated in the following.
\begin{definition}
    A dynamical system of the form in~\eqref{eq_system} is said to be {\it dissipative} in the storage function $V(x)$ with respect to the supply rate $s(y, u)$ if 
    \be\label{eq_def_dissipativity}
        V(x_{t+1}) - V(x_{t}) \le s(y_{t}, u_{t}) 
    \ee
    holds for all input $u_{t}$ and state $x_{t}, 0 \leq t \leq T$.
\end{definition}

From the definition, dissipativity is a fundamental property that connects the internal energy balance of a system with its input--output behavior. 
If inequality~\eqref{eq_def_dissipativity} holds, the storage function $V(x)$ can be interpreted as a generalized energy of the system, 
while the supply rate $s(y,u)$ captures the net inflow of energy. 
This perspective provides a powerful framework for \emph{stability analysis}, since a nonincreasing $V(x)$ implies bounded state trajectories and can certify potential convergence to the equilibrium for controlled systems. 
Moreover, dissipativity generalizes passivity and small-gain conditions, thereby enabling \emph{controller synthesis} for performance objectives such as finite $L_2$-gain and robust regulation. 
For partially known or unknown systems, learning the dissipativity property offers a model-agnostic approach for certifying desirable closed-loop properties, and provides a framework for synthesizing controllers directly from data. 
If the supply rate has a quadratic form 
\be
s(y, u) = [y^{\sf T} \ u^{\sf T}] \begin{bmatrix}
    \Pi_{yy} & \Pi_{yu} \\
    \Pi_{yu}^{\sf T} & \Pi_{uu}
\end{bmatrix} \begin{bmatrix}
    y\\
    u
\end{bmatrix},
\ee
where $\Pi_{yy}, \Pi_{yu}, \Pi_{uu}$ are symmetric matrices, then the system is said to be $(\Pi_{yy}, \Pi_{yu}, \Pi_{uu})$-dissipative. This concept of dissipativity is closely related to input-output $L_2$ stability~\cite{Khalil_NonlinearSystems_02}~\cite{RB_JPC_08}. More generally, a tighter upper bound of $L_2$-gain is given in the following theorem~\cite{TD_ACC_19}.
\begin{theorem}
    For a $(\Pi_{yy}, \Pi_{yu}, \Pi_{uu})$-dissipative system, if there exists positive real numbers $\alpha$ and $\beta$, such that
    \be
    I + \alpha \Pi_{yy} \prec 0, \ \ \begin{bmatrix}
        I + \alpha \Pi_{yy} & \alpha \Pi_{yu} \\
        \alpha \Pi_{yu}^{\sf T} & \alpha \Pi_{uu} - \beta I
    \end{bmatrix} \preceq 0, 
    \ee
    then the system has an $L_2$-gain not greater than $\beta^{1/2}$.
\end{theorem}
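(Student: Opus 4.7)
The plan is to combine the LMI hypothesis with the dissipation inequality so that a telescoping argument isolates $\sum_t \|y_t\|^2$ and bounds it by $\beta\sum_t\|u_t\|^2$. Since the $L_2$-gain is defined through a summed input-output comparison, the natural first move is to pass from pointwise quadratic inequalities to their cumulative form.

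First I would note that multiplying the second LMI on left and right by $[y_t^{\sf T}\ u_t^{\sf T}]$ and its transpose gives, at every time $t$,
\begin{equation*}
\|y_t\|^2 - \beta \|u_t\|^2 + \alpha\, s(y_t, u_t) \;\le\; 0,
\end{equation*}
which I regard as the key pointwise consequence of the hypothesis. Rearranging yields $\|y_t\|^2 \le \beta\|u_t\|^2 - \alpha\, s(y_t,u_t)$. Summing from $t=0$ to $T-1$ produces
\begin{equation*}
\sum_{t=0}^{T-1}\|y_t\|^2 \;\le\; \beta\sum_{t=0}^{T-1}\|u_t\|^2 \;-\; \alpha\sum_{t=0}^{T-1} s(y_t,u_t).
\end{equation*}

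Next I would invoke the dissipativity definition \eqref{eq_def_dissipativity}. Telescoping $V(x_{t+1})-V(x_t)\le s(y_t,u_t)$ over $t=0,\dots,T-1$ gives $\sum_{t=0}^{T-1} s(y_t,u_t)\ge V(x_T)-V(x_0)$. Taking the storage function to be nonnegative with $V(0)=0$ (standard normalization, consistent with the equilibrium assumption $f(0)=0$ made in Section~\ref{sec_DissipativitySystems}), and the natural initial condition $x_0=0$, one gets $\sum_{t=0}^{T-1} s(y_t,u_t)\ge -V(x_0)=0$, so that
\begin{equation*}
\sum_{t=0}^{T-1}\|y_t\|^2 \;\le\; \beta\sum_{t=0}^{T-1}\|u_t\|^2.
\end{equation*}
Taking the limit $T\to\infty$ and a square root yields the $L_2$-gain bound $\|y\|_{\ell_2}\le \beta^{1/2}\|u\|_{\ell_2}$, which is the claim. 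For nonzero initial conditions one instead obtains the standard bias-in-state form $\sum\|y_t\|^2\le \beta\sum\|u_t\|^2+\alpha V(x_0)$, which still certifies the same gain in the input–output sense.

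The only real subtlety is interpretive rather than computational: the first condition $I+\alpha\Pi_{yy}\prec 0$ does not enter the bound derivation directly, since the desired pointwise inequality already follows from the single block LMI (the strict $(1,1)$ block is in fact implied by it and is recorded here presumably to guarantee feasibility through a Schur-complement construction of $\alpha,\beta$). Thus the main obstacle is conceptual — being explicit about the normalization $V\ge 0$, $V(0)=0$ and $x_0=0$ used to suppress the storage-function offset — rather than any technical estimate.
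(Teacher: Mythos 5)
Your argument is correct: quadratic-form evaluation of the block LMI gives the pointwise inequality $\|y_t\|^2-\beta\|u_t\|^2+\alpha\,s(y_t,u_t)\le 0$, and telescoping the dissipation inequality with $V\ge 0$, $V(0)=0$, $x_0=0$ then yields $\sum_t\|y_t\|^2\le\beta\sum_t\|u_t\|^2$; your observation that the strict condition $I+\alpha\Pi_{yy}\prec 0$ is not needed for this derivation is also accurate. Note that the paper itself supplies no proof of this theorem --- it is imported by citation from the authors' earlier ACC work --- so there is nothing in the text to compare against, but your route is the standard one and is sound.
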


The above result highlights how the dissipativity property inherently encodes the Lyapunov asymptotic stability of the system as proven in~\cite{HM_TAC_03}.
This connection motivates the learning of dissipativity directly from data such that the inequality holds approximately and generalizes to unseen trajectories, as it enables data-driven certification of robustness without requiring an explicit dynamic model. 

\section{Dissipativity parameters and its learning: quadratic case}\label{sec_parameterlearning}
\subsection{Dissipativity parameters}\label{sec_parameters}
Note that the supply rate function $s(y, u)$ in~\eqref{eq_def_dissipativity} is both input and output dependent. To simplify the analysis and learning of the function $s(y, u)$, we start with joining two variables into a single combined variable $z$ such that
\be\label{eq_def_z}
z \eqdef
\begin{bmatrix}
    y \\
    u
\end{bmatrix} \in \mathbb{R}^{n_y + n_x}.
\ee
We assume a quadratic form for supply function and a Lyapunov function candidate for the storage function in~\eqref{eq_def_dissipativity}, that is,
\be\label{eq_def_QandP}
s(z) = z^{\sf T} Q z, \ \ V(x) = x^{\sf T} P x,
\ee
where $0 \prec Q = Q^{\sf T} \in \mathbb{R}^{(n_y + n_u) \times (n_y + n_u)}$, 
$0 \prec P = P^{\sf T} \in \mathbb{R}^{x_n \times x_n}$.
Hence, the dissipative property in~\eqref{eq_def_dissipativity} is captured by
\be\label{eq_eq4}
z_{t}^{\sf T} Q z_{t} - (x_{t+1} -x_{t})^{\sf T} P (x_{t+1} -x_{t}) \geq 0.
\ee
\begin{remark}
 The system is passive if $n_u = n_y$, and $Q = \begin{bmatrix}
        M& S\\
        S^{\sf T}&R
    \end{bmatrix}$ with $M, R= 0$ and $S = I$.
\end{remark}
\begin{definition}\label{def_QandP}
    The matrix $Q$ and $P$ are called the dissipativity parameters that are to be estimated. 
\end{definition}

More generally, considering a {\it dynamic multiplier} $\Omega \in {\cal R} {\cal H}_{\infty}^{z_n \times (y_n + u_n)}: (y_k, u_k) \rightarrow z_k$ in~\cite{T_IQClearning_25} acting on inputs and outputs separately, it yields that
\be\label{eq_DynamicMultiplier}
\Tilde{z}_{t} \eqdef \Omega(q^{-1}) \begin{bmatrix}
    y_{t} \\
    u_{t}
\end{bmatrix} =  \begin{bmatrix}
     \Omega_y(q^{-1})&0 \\
    0&\Omega_u(q^{-1})
\end{bmatrix}\begin{bmatrix}
    y_{t} \\
    u_{t}
\end{bmatrix} = \begin{bmatrix}
    z_{y_{t}} \\
    z_{u_{t}}
\end{bmatrix}.
\ee
Under the general dynamic multiplier $\Omega$, the system in~\eqref{eq_system} is said to be \emph{dissipative} with respect to the quadratic supply rate $s(\Tilde{z}_k) = \Tilde{z}_k^{\sf T} \Tilde{Q} \Tilde{z}_k$, 
    if for any trajectory of the aggregated dynamics $(f,\Omega)$ starting from the origin ($x(t_0) = 0$) and for any $t > t_0$, we have
    $\Tilde{z}_{t}^{\sf T} \Tilde{Q}  \Tilde{z}_{t} - (x_{t+1} -x_{t})^{\sf T} P (x_{t+1} -x_{t}) \geq 0$.
\begin{remark}
    A trivial dynamic multiplier $\Omega = I$ yields a quadratic form of supply rate function 
    \be\label{eq_supply}
    s(y, u) = [y^{\sf T} u^{\sf T}] \ Q \begin{bmatrix}
        y\\
        u
    \end{bmatrix}.
    \ee
\end{remark}
From the above remark, for simplicity, we assume that $\Omega = I$ in this paper. The resulting dissipativity coincides with~\eqref{eq_eq4} where $z$ is defined in~\eqref{eq_def_z}.
The dissipativity parameters $Q$ and $P$ established in~\eqref{eq_eq4} are to be estimated. 

By exploiting the system model in~\eqref{eq_system}, the following theorem provides the linear matrices inequality (LMI) for model-based dissipative certification for the linearized system.
\begin{theorem} 
The linearized system around the equilibrium point of~\eqref{eq_system}, i.e., 
$x_{t+1}=A x_{t}+B u_{t}, y_{t}=C x_{t}$ is dissipative if there exist $\rho > 0$, $0 \prec P = P^{\sf T}, 0 \prec Q =Q^{\sf T}$ such that
\begin{equation}\label{eq_lmi_margin}
\begin{bmatrix}
I & 0\\
0 & C^{\sf T}
\end{bmatrix}
Q
\begin{bmatrix}
I & 0\\
0 & C
\end{bmatrix}
-
\begin{bmatrix}
B^{\sf T}P B   & B^{\sf T}P A \\
A^{\sf T}P B   & A^{\sf T}P A - P
\end{bmatrix}
\;\succeq\; \rho\,I 
\end{equation}
holds. 

\begin{proof}
The proof is obtained by applying $y_{t} = Cx_{t}$ and elementary operations to~\eqref{eq_eq4}, and enforce strong positive definiteness by introducing $\rho$.  
\end{proof}
\end{theorem}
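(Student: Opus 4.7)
The plan is to reduce the dissipation inequality of Definition~1 to a single quadratic-form condition in the stacked vector $w_t = [u_t^{\sf T}\ x_t^{\sf T}]^{\sf T}$, and then to recognize the resulting block matrix as exactly the left-hand side of~\eqref{eq_lmi_margin}. First I would substitute the linearized dynamics $x_{t+1}=Ax_t+Bu_t$ and $y_t=Cx_t$ into the storage-function difference $V(x_{t+1})-V(x_t)=x_{t+1}^{\sf T}Px_{t+1}-x_t^{\sf T}Px_t$, expand, and collect terms in $w_t$; this produces
\[
V(x_{t+1})-V(x_t)=w_t^{\sf T}\begin{bmatrix} B^{\sf T}PB & B^{\sf T}PA \\ A^{\sf T}PB & A^{\sf T}PA-P \end{bmatrix}w_t,
\]
which identifies the second block term appearing in~\eqref{eq_lmi_margin}.

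Next I would rewrite $s(y_t,u_t)=z_t^{\sf T}Qz_t$ in the same coordinates. Since $y_t=Cx_t$, the vector $z_t$ is the image of $w_t$ under the block-diagonal linear map $\operatorname{diag}(I,C)$ (after fixing the partition convention of $Q$ that is compatible with the $[u^{\sf T}\ x^{\sf T}]^{\sf T}$ ordering), so by a direct congruence
\[
s(y_t,u_t)=w_t^{\sf T}\begin{bmatrix} I & 0 \\ 0 & C^{\sf T}\end{bmatrix} Q \begin{bmatrix} I & 0 \\ 0 & C\end{bmatrix}w_t.
\]
Combining the two identities, the dissipativity inequality $s(y_t,u_t)-\bigl(V(x_{t+1})-V(x_t)\bigr)\geq 0$ required by Definition~1 is equivalent to the matrix on the left of~\eqref{eq_lmi_margin} (without the $\rho I$ slack) being positive semidefinite on all of $\mathbb{R}^{n_u+n_x}$.

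The hypothesis of the theorem strengthens this to domination by $\rho I$ with $\rho>0$; therefore, for every admissible pair $(x_t,u_t)$,
\[
s(y_t,u_t)-\bigl(V(x_{t+1})-V(x_t)\bigr)\;\geq\;\rho\,\|w_t\|^2\;\geq\;0,
\]
which is precisely the dissipation inequality of Definition~1 with storage function $V(x)=x^{\sf T}Px$ and supply rate $s(y,u)=z^{\sf T}Qz$, and moreover with a uniform strict margin $\rho$. The assumptions $P\succ 0$ and $Q=Q^{\sf T}$ ensure that $V$ and $s$ are well-defined quadratic forms of the correct signature.

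I do not anticipate any significant obstacle beyond bookkeeping: the argument is a direct expansion followed by a congruence transformation. The one place that needs genuine care is the block reordering between the natural partition of $Q$ in $z=[y^{\sf T}\ u^{\sf T}]^{\sf T}$ coordinates and the $[u^{\sf T}\ x^{\sf T}]^{\sf T}$ ordering forced by the storage-difference computation; once a consistent convention is fixed, the two block matrices in~\eqref{eq_lmi_margin} fall out of the algebra, and the slack $\rho I$ is exactly what upgrades nonstrict dissipativity to the strict-margin LMI stated in the theorem.
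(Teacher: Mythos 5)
Your proof is correct and follows essentially the same route the paper intends: substitute $x_{t+1}=Ax_t+Bu_t$, $y_t=Cx_t$, expand the quadratic forms in $w_t=[u_t^{\sf T}\ x_t^{\sf T}]^{\sf T}$, and read off the LMI with the $\rho I$ margin (including the block-reordering caveat you flag). The only point worth noting is that you (correctly) start from the difference $V(x_{t+1})-V(x_t)=x_{t+1}^{\sf T}Px_{t+1}-x_t^{\sf T}Px_t$, which is what actually produces the block $A^{\sf T}PA-P$ in~\eqref{eq_lmi_margin}, whereas the paper's one-line proof cites~\eqref{eq_eq4}, whose term $(x_{t+1}-x_t)^{\sf T}P(x_{t+1}-x_t)$ would yield a different second block; your reading is the one consistent with the stated LMI and with the paper's Definition of dissipativity.
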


\subsection{OC-SVM for dissipativity parameters}\label{sec_OCSVM_diss_parameter}
By linearizing and exploiting the system model in~\eqref{eq_system}, the dissipative certificate can be obtained according to~\eqref{eq_lmi_margin}. More realistically, in data-driven, or model-free approach, the dynamic modeling procedure can be skipped. Instead, system properties such as the discussed dissipativity are estimated from data~\cite{TD_ACC_19, martin2023guarantees, koch2021provably}. 
For the estimation of the dissipativity parameters $Q$ and $P$ as defined in~\eqref{eq_def_QandP}, we suppose that $m$ trajectories $\{y^i, u^i \}_{i = 1}^m$ and $\{x^i\}_{i=1}^m$ are sampled independently, from a distribution of random input excitations such that the magnitude of the input satisfies a Gaussian distribution with zero mean and each magnitude lasts for a fixed duration. Formally, we denote this distribution as a probability measure $\mathbb{P}$. The goal is therefore to determine valid matrices $Q$ and $P$ such that for all $i$ and $t$, that is, from~\eqref{eq_eq4}, the following inequality holds approximately: 
$(z^i_t)^{ \sf T} Q z_{t}^i - (x_{t+1}^i -x_{t}^i)^{\sf T} P (x^i_{t+1} -x^i_{t}) \geq 0$.
Hence, after elementary operations, the dissipativity inequality becomes
$ \langle Q,   z_{t} z_{t}^{\sf T} \rangle + \langle P,  x_{t}x_{t}^{\sf T} - x_{t+1}x_{t+1}^{\sf T} \rangle \geq 0$,
where $ \langle \cdot, \cdot \rangle$ is inner product.
\begin{definition}
    An input-output trajectory $\{ y_{t}, u_{t}\}_{t=1}^T$ that determines $z_{t} = [y_{t} \ u_{t}]^{\sf T}$, the corresponding dual dissipativity parameters refer to
    \be\label{eq_dualDissipativityParameter}
    \begin{cases}
        \Gamma_t \!\!\!\! & \eqdef z_{t}z_{t}^{\sf T}, \\
        \Delta_t \!\!\!\! & \eqdef x_{t}x_{t}^{\sf T} -  x_{t+1} x_{t+1}^{\sf T}, 
    \end{cases}
    \ee
\end{definition}
for all $t$.
Hence, with the dual dissipativity parameters of the sampled trajectories $\{\Gamma^{i}\}_{i = 1}^{n}$ and $\{ \Pi^{i}\}_{i = 1}^{n}$, we seek $Q$ and $P$ such that for all $t$, 
\be\label{eq_parametric_diss}
\biggl \langle Q, \Gamma_t^{i} \biggr \rangle - \biggl \langle P, \Delta_t^{i} \biggr \rangle \geq 0,
\ee
approximately. This problem is amenable to one-class support vector machine (OC-SVM), where we maximize the margin of the inequality, i.e., a nonnegative value $\rho >0$ such that $\langle Q, \Gamma_t^{i} \rangle - \langle P, \Delta_t^{i} \rangle \geq \rho $ for all $i$ and $t$, but penalizing the norm of the slope, i.e., $\|Q\|_F + \|P\|_F$. Equivalently, the problem is to minimize $\|Q\|_F^2 + \|P\|_F^2$ while rewarding $\rho$. To ensure a strong positive definiteness, we impose the constraint $P \succeq \epsilon I$ for some scalar $\epsilon > 0$. Similarly, we impose supply rate to be $\leq -\|y\|^2 + \alpha \|u\|^2$ for an ${\cal L}_2$-gain, that is $Q\prec  \begin{bmatrix}
\alpha I  & 0 \\
0 & -  I
\end{bmatrix}$. Therefore, the learning of the dissipativity parameters in~\eqref{eq_parametric_diss} is cast as the following optimization problem: 
\begin{align}\label{eq_obj_parametric}
\min_{P, Q, \rho, \alpha} & \ \ \Bigg\| \begin{bmatrix}
Q  & 0 \\
0 & P
\end{bmatrix} \Bigg\|_F^2 - \rho  + \lambda \alpha \\
 s. t. & \ \ P \succeq \epsilon I,\\
 & \ \ Q \prec  \begin{bmatrix}
\alpha  & 0 \\
0 & - I
\end{bmatrix}, \alpha > 0, \\
 & \ \ \biggl \langle \begin{bmatrix}
Q  & 0 \\
0 & P
\end{bmatrix}, \begin{bmatrix}
\Gamma_t  & 0 \\
0 & \Delta_t
\end{bmatrix}  \biggr \rangle \geq \rho, 1 \leq t \leq T,  
\end{align}
where $\Gamma_t$ and $\Delta_t$ are in~\eqref{eq_dualDissipativityParameter}, $\lambda$ serves as a weighting parameter that characterizes the trade-off between the dissipativity margin and the ${\cal L}_2$-gain. A larger $\lambda$ penalizes higher $\alpha$ more strongly, thereby enforcing a stricter ${\cal L}_2$-gain, that favors systems with smaller input–output amplification. Conversely, a smaller $\lambda$ relaxes this restriction, prioritizing the feasibility margin $\rho$ and fit the dissipativity inequalities closely to the data. Hence, $\lambda$ controls the trade-off between conservativeness and data adaptivity. Large $\lambda$ yields a more conservative estimate of dissipativity parameters, while smaller $\lambda$ leads to tighter data fitting at the risk of weaker generalization.

\section{Dissipativity learning in RKHS}\label{sec_RKHS}
The parametric dissipativity learning framework in section~\ref{sec_parameterlearning} constrains the supply and storage function $s$ and $V$, to have quadratic forms as in~\eqref{eq_def_QandP}.
Although this yields a tractable formulation, it limits expressiveness and may inadequately capture the nonlinear behavior of complex systems. To address this limitation, the framework is extended to {\it Reproducing Kernel Hilbert Spaces} (RKHSs) in this section, where $s$ and $V$ are sought in rich, infinite-dimensional function spaces. The RKHS structure defines functions through inner products with kernel-induced feature maps, enabling flexible and data-driven representations of dissipativity properties while preserving convexity and computational tractability of the learning problem.
 
Hence, we first discuss various preliminary results necessary for learning dissipativity properties in RKHS in section~\ref{sec_RKHS}. The nonparametric dissipativity learning in RKHS is introduced in section~\ref{sec_RKHS_diss}.

\subsection{Reproducing Kernel Hilbert Spaces}\label{sec_RKHS}
We first introduce reproducing kernel Hilbert spaces with its properties. 
\begin{definition}\cite{def_RKHS_50}
    A reproducing kernel Hilbert Space (RKHS) is a Hilbert Space ${\cal H}_k$ of functions $f: {\cal X} \rightarrow \mathbb{R}$ with a reproducing kernel $k: {\cal X}^2 \rightarrow \mathbb{R}$ where $k(x, \cdot) \in {\cal H}$ and $f(x) = \<k(x, \cdot), f\>$.
\end{definition}

The RKHS is discussed in more details in the following. Consider a kernel function $k(x, \cdot)$ with two variables $x, \cdot \in {\cal X} \times {\cal X}$. RKHS is a function space consisting functions $f$ such that
${\cal H}_k \eqdef \{f(\cdot) = \sum_{i=1}^n\alpha_i k(x_i, \cdot)\}$.
This construction equips $\mathcal{H}_k$ with an inner product 
$\langle f, g \rangle_{\mathcal{H}_k}$ such that the reproducing property 
$f(x) = \langle k(x,\cdot), f \rangle_{\mathcal{H}_k}$ holds for all 
$f \in \mathcal{H}_k$ and $x \in \mathcal{X}$. 
This property implies that the evaluation of any function at a point $x$ 
can be expressed as an inner product with the kernel section $k(x,\cdot)$, 
that is crucial for learning problems to 
be carried out entirely in terms of kernel evaluations. 

Specifically, instead of assuming quadratic forms as in the parametric setting, we lift 
the data points $z$ and $x$ into (possibly infinite-dimensional) feature 
spaces $\mathcal{H}_z$ and $\mathcal{H}_x$ using feature map in the following assumption.
\begin{assumption}\label{assum_featuremaps}
    Suppose that feature maps that takes data into RKHS are $\phi: {\cal Z} \rightarrow {\cal H}_z$ and $\psi: {\cal X} \rightarrow {\cal H}_x$, respectively, where ${\cal H}_z$ and ${\cal H}_x$ are the corresponding RKHSs.
\end{assumption}
\begin{remark}
    The feature map enables linear operation in RKHSs to represent nonlinear functions in the original space.
    From the defining property of RKHSs, inner products between features $\phi(z)$ or $\psi(x)$ define similarities through the kernel as follows
    \be\label{eq14}
        \langle \phi(z), \phi(z') \rangle_{{\cal H}_z} = \kappa(z, z').
    \ee
\end{remark}
Similarly to~\eqref{eq_parametric_diss} and~\eqref{eq_dualDissipativityParameter}, we redefine the dual dissipativity parameters in the following.
\begin{definition}
    An input-output trajectory $\{ y_{t}, u_{t}\}_{t=1}^T$ that determines $z_{t} = [y_{t} \ u_{t}]^{\sf T}$, the corresponding dual dissipativity parameters refer to
    \be\label{eq_dualDissipativityParameter_RKHS}
    \begin{cases}
    \Gamma_t: \phi(z_{t}) \otimes \phi(z_{t})\\
    \Delta_t: \psi(x_{t}) \otimes \psi(x_{t})  - \psi(x_{t+1}) \otimes \psi(x_{t+1}),
    \end{cases}
    \ee
    where $\Gamma_t$ and $\Delta_t$ are Hilbert-Schmidt operators in ${\cal H}_z$ and ${\cal H}_x$, respectively.
\end{definition}
\begin{remark}\label{remark_Gamma_Delta}
    $\Gamma_t$ is a rank one operator in the space of Hilbert-Schmidt operators on ${\cal H}_z$. $\Delta_t$ is at most rank two operator.
\end{remark}
\begin{definition}
    The supply rate and storage functions are defined as 
    \be\label{eq_DissipativityParameter_RKHS}
    \begin{cases}
    &s(z) \eqdef \<\phi(z), \Pi \phi(z) \>\\
    &V(x) \eqdef \<\psi(x), P \psi(x) \>,
    \end{cases}
    \ee
    where $\Pi: {\cal H}_z \rightarrow {\cal H}_z$ and $P: {\cal H}_x \rightarrow {\cal H}_x$ are Hilbert-Schmidt operators and new dissipativity parameters.
\end{definition}

With the above assumption, $s(z)$ and $V(x)$ can be expressed as inner products 
with operators acting on these features, and regularization can be 
performed via the Hilbert--Schmidt norms of these operators. 
This construction preserves convexity of the learning problem 
while allowing highly nonlinear, data-adaptive representations 
of dissipativity, which leads directly to the formulation in 
Section~\ref{sec_RKHS_diss}.

\subsection{OC-SVM for Nonparametric Dissipativity Learning}\label{sec_RKHS_diss}
The RKHS framework generalizes the parametric quadratic forms 
by lifting the variables into high-dimensional feature spaces. The following theorem provides the dissipativity learning problem in the RKHS framework. 
\begin{theorem}\label{thm_dissipativity_obj}
    Suppose Assumption~\ref{assum_featuremaps} holds. The dissipativity in~\eqref{eq_def_dissipativity} is captured by the following inequality with margin $\rho$ such that
\begin{align}
\<\Pi, \Gamma_t\> + \< P, \Delta_t\> \geq \rho,
\end{align}
for all sampled data points.
\end{theorem}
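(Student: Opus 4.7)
The plan is to reduce the claimed inequality to the defining dissipativity inequality of Definition 1 by using the duality between a Hilbert--Schmidt operator acting on a feature vector and its inner product with the associated rank-one tensor. First I would introduce a margin $\rho\ge 0$ and rewrite the dissipativity condition $V(x_{t+1})-V(x_t)\le s(y_t,u_t)$ in the equivalent additive form
\begin{equation*}
s(z_t) + V(x_t) - V(x_{t+1}) \;\ge\; \rho,
\end{equation*}
where $z_t=[y_t^{\sf T}\,u_t^{\sf T}]^{\sf T}$ as defined in~\eqref{eq_def_z}. The theorem then amounts to showing that the left-hand side coincides with $\langle\Pi,\Gamma_t\rangle+\langle P,\Delta_t\rangle$ once the quadratic forms in the original state/output space are replaced by their RKHS counterparts~\eqref{eq_DissipativityParameter_RKHS}.

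The core step is the elementary tensor identity for Hilbert--Schmidt operators on a separable Hilbert space $\mathcal H$: for any $A\in\mathrm{HS}(\mathcal H)$ and $u\in\mathcal H$,
\begin{equation*}
\langle A,\, u\otimes u\rangle_{\mathrm{HS}} \;=\; \langle u,\, A u\rangle_{\mathcal H}.
\end{equation*}
Applying this with $\mathcal H=\mathcal H_z$, $A=\Pi$, $u=\phi(z_t)$ yields $\langle\Pi,\Gamma_t\rangle=\langle\phi(z_t),\Pi\phi(z_t)\rangle=s(z_t)$ by~\eqref{eq_DissipativityParameter_RKHS} and~\eqref{eq_dualDissipativityParameter_RKHS}. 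Applying it twice with $\mathcal H=\mathcal H_x$, $A=P$, and $u=\psi(x_t),\psi(x_{t+1})$, and using linearity of the HS inner product on the decomposition of $\Delta_t$, gives
\begin{equation*}
\langle P,\Delta_t\rangle = \langle\psi(x_t),P\psi(x_t)\rangle - \langle\psi(x_{t+1}),P\psi(x_{t+1})\rangle = V(x_t)-V(x_{t+1}).
\end{equation*}
Summing the two identities reproduces exactly $s(z_t)+V(x_t)-V(x_{t+1})$, so the dissipativity inequality with margin $\rho$ is equivalent to $\langle\Pi,\Gamma_t\rangle+\langle P,\Delta_t\rangle\ge\rho$ on every sampled $(x_t,z_t,x_{t+1})$, as claimed.

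The only potentially delicate point is justifying the tensor identity in the infinite-dimensional RKHS setting; I would dispatch this by appealing to the standard fact that $u\otimes v$ is a rank-one (hence Hilbert--Schmidt) operator with $\|u\otimes v\|_{\mathrm{HS}}=\|u\|\|v\|$, and that for any HS operator $A$ one has $\langle A,u\otimes v\rangle_{\mathrm{HS}}=\mathrm{tr}(A^*(u\otimes v))=\langle u,Av\rangle$, together with Remark~\ref{remark_Gamma_Delta} which guarantees $\Gamma_t$ and $\Delta_t$ are finite-rank and therefore lie in the HS class so that the pairings are well defined. No deeper analytical obstacle arises: once the tensor-inner-product identity is in place, the proof is a direct substitution of definitions~\eqref{eq_dualDissipativityParameter_RKHS} and~\eqref{eq_DissipativityParameter_RKHS} into the dissipativity inequality.
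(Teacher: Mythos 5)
Your proposal is correct and follows essentially the same route as the paper: both arguments rewrite the dissipativity inequality in the additive form $s(z_t)+V(x_t)-V(x_{t+1})\ge \rho$ and then identify each quadratic form $\langle u, Au\rangle$ with the Hilbert--Schmidt pairing $\langle A, u\otimes u\rangle$ (the paper phrases this via traces of $\Pi\,\phi(z_t)\otimes\phi(z_t)$ and $P\,\psi(x_t)\otimes\psi(x_t)$, which is the same identity). Your explicit justification of the tensor identity and the finite-rank/HS well-definedness of $\Gamma_t$ and $\Delta_t$ is a slightly more careful rendering of the same substitution-of-definitions argument.
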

\begin{proof}
From~\eqref{eq_dualDissipativityParameter_RKHS},~\eqref{eq_DissipativityParameter_RKHS} and Assumption~\ref{assum_featuremaps}, it follows that for all $t$, $s(z_t) + V(x_t) - V(x_{t+1}) \geq 0$, that is
\begin{align}\label{eq16}
\nonumber
& \langle \phi(z_{t}), \Pi \phi(z_{t})\rangle - \big( \left<\psi(x_{t+1}), P \psi(x_{t+1})\right> - \<\psi(x_{t}), P \psi(x_{t}) \>\big) \\
=&\textnormal{tr}\left( \Pi \phi(z_{t}) \otimes \phi(z_{t})\right) +  \textnormal{tr}\left(  P   \psi(x_{t}) \otimes \psi(x_{t})\right) - \textnormal{tr}\left(  P  \psi(x_{t}) \otimes \psi(x_{t})\right),
\end{align}
which can be written as in~\eqref{eq16} from the definition of $\Gamma_t$ and $\Delta_t$ in~\eqref{eq_dualDissipativityParameter_RKHS}. This completes the proof.
\end{proof}

Theorem~\ref{thm_dissipativity_obj} provides an inequality to be satisfied for dissipativity. We posed certain constraints on the operators to ensure the robustness of the dissipativity learning. 

Firstly, the operator $P$ that corresponds to the storage function $V(x) = \langle \psi(x), P\psi(x)\rangle_{{\cal H}_x}$, must be a positive operator in ${\cal H}_x$, i.e., $P \in {\textnormal{HS}}_+({\cal H}_x)$. This is enforced by constraining $P - \epsilon P_0 \geq 0, \epsilon > 0$, where $P_0$ can be an identity operator. This ensures $P$ to be strictly positive. Secondly, we introduce two operators $\Upsilon_0$ and $\Upsilon_0$, that act on the feature map $\phi(z)$ and jointly upper bound the supply rate as $s(z) \leq \<\phi(z ), \Upsilon_0 \phi(z )\> + \alpha \<\phi(z ), \Upsilon_1 \phi(z )\>, \alpha > 0$. Note that in this bound, we can specify that $\Upsilon_1$ acts on the output feature $\phi(y)$. To this end, $\alpha$ characterizes the bound for ${\cal L}_2$-gain. 

The representer theorem makes the operators 
$\Pi, P$ lie in the finite span of the training data 
$\{\varphi(z), \psi(x)\}$, thus reducing the infinite-dimensional 
optimization problem to a finite-dimensional convex program. 
This makes nonparametric dissipativity learning both expressive and tractable, 
allowing us to capture highly nonlinear energy-exchange properties 
of the system while maintaining computational feasibility.

Follow the structure of OC-SVM for learning dissipativity parameters in section~\ref{sec_OCSVM_diss_parameter}, 
we introduce a margin $\rho $ of the inequality of dissipativity,
$\rho > 0 $ and it yields an optimization problem for $ 1 \leq t \leq T$:
\begin{align}\label{eq_def_nonparam_obj}
\min_{\Pi, P, \rho,\alpha } & \frac{1}{2} \left( \|  \Pi \|_{\textnormal{HS}}^2 + \|  P \|_{\textnormal{HS}}^2 \right) - \rho + \lambda \alpha\\
\nonumber
 s. t. & \ \ P - \epsilon P_0 \geq 0, \\
 \nonumber
 & - \Pi - \Upsilon_0 + \alpha \Upsilon_1  \geq 0, \alpha > 0 \\
\nonumber
& \<\Pi, \Gamma_t\> + \<P, \Delta_t\> \geq \rho,
\end{align}
where $\Upsilon_0$ and $\Upsilon_1$ act on the features of $y$ and $u$, respectively. 
\subsection{The Finite-dim Optimization}
The dissipativity learning problem in~\eqref{eq_def_nonparam_obj} is posed in terms of operators $P, \Pi$ acting on infinite-dimensional Hilbert spaces, which makes the problem intractable for direct computation. To bridge this gap, we leverage the representer property of RKHS operators and restrict $P$ and $\Pi$ to finite-rank forms spanned by the available data, thereby expressing all operator norms and dissipativity constraints through kernel Gram matrices. This reduction replaces inner products in feature space with kernel evaluations, transforming trace and bilinear operator terms into quadratic forms involving $\Phi, \Psi, \Phi_y$, and $\Phi_u$. As a result, the original operator-based OCSVM formulation is equivalently rewritten as a finite-dimensional constrained optimization problem over the coefficient matrices $p$ and $\pi$, together with the slack margin $\rho$ and gain parameter $\alpha$. This step ensures that the infinite-dimensional dissipativity learning problem is now amenable to practical computation, while retaining the expressive power of the RKHS-based formulation. That is, the operators play a central role in both the cost function and the dissipativity constraints, and their treatment in RKHS allows one to bridge the infinite-dimensional formulation with a finite-dimensional optimization problem.
 
Specifically, to obtain a finite-dimensional optimization formulation of~\eqref{eq_def_nonparam_obj}, we interpret both the objective and the constraints in the RKHS setting.
\subsubsection{Operators in the cost}
We begin with the Hilbert-Schmidt (H-S) norm of an operator $\Pi: {\cal H}_z \to {\cal H}_z$.
\begin{definition}
    For a Hilbert-Schmidt operator $\Pi$ on a Hilbert space ${\cal H}_z$, its norm is defined as
    $\|  \Pi \|_{\textnormal{HS}}^2 \eqdef \sum_k \| \Pi e_k \|^2$,
    where ${e_k}$ is an orthonormal basis of the Hilbert space. Equivalently, if $\Pi$ admits a decomposition $\Pi =\sum_r u_r \otimes v_r$ with $u_r, v_r \in {\cal H}_z$, then
    \be\label{1016_1}
    \|  \Pi \|_{\textnormal{HS}}^2 = \sum_{r, s} \langle u_r, u_s \rangle \langle v_r, v_s \rangle. 
    \ee
\end{definition}

In practice, we restrict $\Pi$ and $P$ to a finite-dimensional subspace of ${\cal H}_z$ and ${\cal H}_x$ spanned by the training samples. That is, $\Pi$ and $P$ are assumed to be finite-rank operators of rank at most $T$, expressed as
$\Pi = \sum_{i,j}^T  \pi_{i,j} \, \phi_i \otimes \phi_j, \ \ P = \sum_{i,j}^T  p_{i,j} \, \psi_i \otimes \psi_j$,
where $\pi_{i,j}$ and $p_{i,j}$ are coefficients that correspond to the basis $\phi_i \otimes \phi_j$ and $\psi_i \otimes \psi_j$, respectively. 
Hence, by construction, the coefficients $\pi_{i,j}$ and the feature vectors $\phi(z_i)$ define $\Pi$ completely on the subspace spanned by $\{\phi(z_1), \cdots, \phi(z_T)\}$. Therefore, the Hilbert-Schmidt norm of $\Pi$ is as follows: 
\begin{align} 
 \|  \Pi \|_{\textnormal{HS}}^2 
= & \textnormal{tr}\left( {\pi}^{\sf T} \Phi \pi \Phi^{\sf T} \right), \label{0930_2} 
\end{align}
where
$\Phi$ is the Gram matrix constructed by data; $\Phi_{i,j}$ is the element on $i$-th row and $j$-th column, i.e. $\Phi_{i,j} \eqdef \kappa (\phi(z_i), \phi(z_j)) = \<\phi(z_i), \phi(z_j)\>$ as shown in~\eqref{eq14};~\eqref{0930_2} holds from the symmetry of $\Phi$. Similarly, the coefficients $p_{i,j}$ and the feature vectors $\phi(x_i)$ define $P$ completely on the subspace spanned by $\{\phi(x_1), \cdots, \phi(x_T)\}$ and it follows that
\begin{align}\label{1007_1} 
 \|  P \|_{\textnormal{HS}}^2 =\textnormal{tr}\left( {p}^{\sf T} \Psi p \Psi^{\sf T} \right).
\end{align}
\subsubsection{Operators in the dissipativity inequality constraint} For the dissipativity inequality $\<\Pi, \Gamma_t\> + \< P, \Delta_t\> \geq \rho$ as in Theorem~\ref{thm_dissipativity_obj}, we restrict the operators $\Gamma_t$ and $\Delta_t$ to be spanned by the training samples. From~\eqref{eq_dualDissipativityParameter_RKHS}, it follows that
\begin{align} 
\<\Pi, \Gamma_t \>
= &  \sum_{i,j}^T \pi_{i,j} \Phi_{j,t}\, \Phi_{i,t},\label{1016_5} 
\end{align}
where
~\eqref{1016_5} holds from the trace of a rank one operator equals the inner product of its factors, i.e., $\textnormal{tr}(\phi(z_i) \otimes \phi(z_t)) = \<\phi(z_t), \phi(z_i)\>$. Similarly, 
\begin{align}
\<P,\Delta_t\>  
= \sum_{i,j}^T p_{i,j} \left( \Psi_{j,t} \Psi_{i,t} - \Psi_{j,t+1} \Psi_{i,t+1} \right). \label{1016_6}
\end{align}
Therefore, the dissipativity inequality is 
\be\label{1004_4}
\sum_{i,j}^T \pi_{i,j} \Phi_{j,t}\, \Phi_{i,t} + \sum_{i,j}^T p_{i,j} \left( \Psi_{j,t} \Psi_{i,t} - \Psi_{j,t+1} \Psi_{i,t+1} \right) \geq \rho.
\ee
\subsubsection{Operators in the constraints for bounds of the storage and $L_2$ gain} 
In learning dissipativity parameters in~\eqref{eq_obj_parametric}, it is assumed that the storage is quadratic with baseline $V(x) \geq \epsilon V_0(x)$ with $V_0(x) \eqdef \|x\|^2$. Similarly, let $V(x) \eqdef \<\psi(x), P\psi(x)\>$ and $P_0 \eqdef \sum_i \sum_j c_{i} c_j \psi(x_i) \otimes \psi(x_j) \in {\cal H}_x$ such that $P \geq \epsilon P_0$. From~\eqref{eq_DissipativityParameter_RKHS}, it follows that for all $t$,
the constraint that corresponds to the energy storage is
\be
\begin{aligned}\label{1004_3}
& \<\psi(x_t), P\psi(x_t) \> - \<\psi(x_t), P_0\psi(x_t) \> \\
=  & \textnormal{tr}\left(P \psi(x_t) \otimes \psi(x_t)\right) - \textnormal{tr}\left(P_0 \psi(x_t) \otimes \psi(x_t)\right) \\
= & \textnormal{tr}\left(\sum_{i,j} p_{i,j}\psi(x_i)\otimes \psi(x_j) \psi(x_t) \otimes \psi(x_t)\right) - \textnormal{tr}\left(\sum_{i,j} c_i c_j\psi(x_i)\otimes \psi(x_j) \psi(x_t) \otimes \psi(x_t)\right) \\
= & \textnormal{tr}\left(\sum_{i,j} p_{i,j} \Psi_{j,t}\psi(x_i)\otimes \psi(x_t)\right)  - \textnormal{tr}\left(\sum_{i,j} c_i c_j \Psi_{j,t}\psi(x_i)\otimes \psi(x_t)\right)  \\
= & \sum_{i,j} (p_{i,j} - c_i c_j) \Psi_{j,t} \Psi_{i,t}.
\end{aligned}
\ee
To obtain the linear coefficients $c_i, i \in \{1,2,\cdots, T\}$ in the operator $P_0$, we assume that
\be\label{1006_1}
\sqrt{V_0(x)} =  \< \sum_{i=1}^T c_i\psi(x_i), \psi(x)\>,
\ee
that is, the function $V_0$ is approximated by a linear summation of
$(\sqrt{V_0})(\cdot) = \sum_{i=1}^T c_i \Psi(x_i,\cdot)$.
Therefore, with~\eqref{1006_1}, we admit the following assumption.
\begin{assumption}
For all $t = \{1,2,...,T\}$, the following holds for all $c_i$
\be\label{20250410_1}
\sqrt{V_0 (x_t)} = \sum_{i=1}^T c_i \Psi_{i,t} = \|x_t\|.
\ee
\end{assumption}
The coefficients $ c \eqdef \left[c_1, c_2, \cdots, c_T \right]^{\sf T} \in \mathbb{R}^T $ can be estimated as 
\be\label{eq_hat_c}
\hat{c} = \Psi^{-1} \left[\|x_1\|, \|x_2\|, \cdots, \|x_T\| \right]^{\sf T}.
\ee
 
For the supply function $s(z)$, let $\Upsilon_0, \Upsilon_1 \in {\cal H}_z$ be operators applying on the outputs $y$ and inputs $u$, respectively. 
Similar to~\eqref{20250410_1}, we assume that
\begin{align}
\|y_t\| \approx &\sum_{i=1}^T \upsilon_i^y \Phi^y_{i,t}, \ \ \|u_t\| \approx \sum_{i}^T \upsilon_i^u \Phi_{i,t}^u.
\end{align}
Consequently, we make the following kernel interpolation approximation scheme. For all $t = \{1,2,...,T\}$, 
    \begin{align}\label{1004_1}
\!\!\|y_t\|^2 \!\!= \!\!\sum_{i, j}^T \!\upsilon_i^y \upsilon_j^y \Phi^y_{i,t} \Phi^y_{j,t}, \|u_t\|^2 \!\!= \!\!\sum_{i, j}^T\! \upsilon_i^u \upsilon_j^u \Phi^u_{i,t} \Phi^u_{j,t}. \ \ 
\end{align}

The coefficient estimates are 
\begin{align}\label{1006_2}
\hat{v}^y = &(\Phi^y)^{-1} \left[\|y_1\|, \|y_2\|, \cdots, \|y_T\| \right]^{\sf T}, \\
\hat{v}^u = &(\Phi^u)^{-1} \left[\|u_1\|, \|u_2\|, \cdots, \|u_T\| \right]^{\sf T},
\end{align}
where $\Phi^y_{ij} \eqdef \kappa_y(y_i, y_j)$ and $\Phi^u_{ij} \eqdef \kappa_u(y_i, y_j)$.
Hence, the estimate $\hat{\Upsilon}_0$ and $\hat{\Upsilon}_1$ are both finite-rank Hilbert-Schmidt operators in the corresponding RKHS such that
\begin{align}\label{1007_2}
\hat{\Upsilon}_0 
&= \sum_{i=1}^{T} \sum_{j=1}^{T} 
\hat{v}_i^{\,y} \, \hat{v}_j^{\,y} \;
\phi(y_i) \otimes \phi(y_j), \\ 
\hat{\Upsilon}_1 
&= \sum_{i=1}^{T} \sum_{j=1}^{T} 
\hat{v}_i^{\,u} \, \hat{v}_j^{\,u} \;
\phi(u_i) \otimes \phi(u_j).
\end{align}
Note that the estimated $\hat{\Upsilon}_0$ and $\hat{\Upsilon}_1$ are constrained to the available data, which causes estimation error with respect to $ {\Upsilon}_0$ and $ {\Upsilon}_1$, respectively.

The $L_2$ gain-like inequality enforces that $- \hat{\Pi} + \hat{\Upsilon}_0 + \alpha \hat{\Upsilon}_1 \geq 0$, and for all $t$,
\begin{align}\label{eq36}
-  & \< \phi(y_t), \hat{\Upsilon}_0\phi(y_t)\> +  \alpha \< \phi(u_t), \hat{\Upsilon}_1\phi(u_t) \>  \geq \<  \phi(z_t), \hat{\Pi} \phi(z_t) \>.
\end{align}
Hence, the following holds
\begin{align}\label{1004_2}
\begin{aligned}
\sum_{i,j}^T \!\!\pi_{i,j}\! \Phi_{j,t} \!\Phi_{i,t} \!\!
\leq \!- \!\!\sum_{i,j}^T \hat{\upsilon}_i^y \hat{\upsilon}_j^y \Phi_{i,t}^y \Phi^y_{j,t} \!\!+ \! \!\alpha \!\!\sum_{i,j}^T\! \hat{\upsilon}_i^u\! \hat{\upsilon}_j^u \!\Phi_{i,t}^u \Phi^u_{j,t}. \ \ \ \ \   
\end{aligned}
\end{align}

Based on the finite-rank operator expansions
the Hilbert--Schmidt operators $P$ and $\Pi$ are restricted to the finite spans of the training data, 
so that their actions and norms can be expressed entirely in terms of the kernel Gram matrices $\Phi$, $\Phi^y$, $\Phi^u$ and $\Psi$. 
Specifically, the Hilbert--Schmidt norms of these operators become quadratic forms of the coefficient matrices $p$ and $\pi$, 
yielding the trace expressions in~\eqref{0930_2} and~\eqref{1007_1}, respectively. 

The dissipativity inequalities involving $\Pi$, $P$, $\Gamma_t$, and $\Delta_t$ 
reduce to algebraic constraints represented through kernel evaluations as in~\eqref{1016_5} and \eqref{1016_6}. 
The operator bounds enforcing the positivity of the storage function 
and the $\mathcal{L}_2$-gain condition are similarly reformulated through data-dependent terms 
as shown in~\eqref{1004_3} and~\eqref{1007_2}, 
where the reference operators $\hat{\Upsilon}_0$ and $\hat{\Upsilon}_1$ 
are constructed from the regression coefficients $\hat{v}^{y}$ and $\hat{v}^{u}$, 
and the baseline storage operator is determined by $\hat{c}$. 

From these results, the infinite-dimensional operator optimization problem in the RKHS 
is reduced to the following finite-dimensional convex optimization problem:
\be
\begin{aligned} 
\min_{p, \pi, \rho, \alpha } & \ \ \frac{1}{2} \left( \textnormal{tr}\left( {\pi}^{\sf T} \Phi \pi \Phi^{\sf T} \right) +  \textnormal{tr}\left( {p}^{\sf T} \Psi p \Psi^{\sf T} \right) \right) - \rho  + \lambda \alpha\\
\textnormal{s.t.} & \ \ \rho \geq 0, \alpha \geq 0, \\
& \sum_{i,j}^T (p_{i,j} - \hat{c}_i \hat{c}_j) \Psi_{j,t} \Psi_{i,t} \geq 0,\\
&  \sum_{i,j}^T\!\! \alpha \hat{\upsilon}_i^u \hat{\upsilon}_j^u  \Phi_{i,t}^u \Phi^u_{j,t} \!-\!\hat{\upsilon}_i^y \hat{\upsilon}_j^y \Phi_{i,t}^y \Phi^y_{j,t} \geq \!\!\sum_{i,j}^T \!\!\pi_{i,j}  \Phi_{j,t}  \Phi_{i,t},\\
& \sum_{i,j}^T \pi_{i,j} \Phi_{j,t} \Phi_{i,t} \! + \! p_{i,j} \!
\left( \Psi_{j,t} \Psi_{i,t} \!\!- \!\!\Psi_{j,t+1} \Psi_{i,t+1} \right) \geq \rho.\\
\end{aligned} 
\ee
Note that the parameters $\hat{c}$, $\hat{v}^{u}$, and $\hat{v}^{y}$ are obtained from the training data samples according to~\eqref{eq_hat_c} and~\eqref{1006_2}.

\section{Error Analysis for Learning in RKHS}\label{sec_ErrorAnalysis}
In KRR, we assume that data samples $\bar{z}_i, i \in \{1,2,\cdots, m\}$ are drawn from a probability measure $\rho$ on ${\bar{\cal{Z}}} = {\cal \bar{X}} \times {\cal \bar{Y}}$. The regression function to be solved is
\be\label{eq46}
f_{\rho}(x) = \int_{Y}yd\rho(y|x), x \in {\cal \bar{X}},
\ee
where $\rho(y|x)$ is the conditional distribution at $x$ induced by $\rho$. 

Let $K: {\cal \bar{X}} \times {\cal \bar{X}} \to \mathbb{R}$ be a {\it Mercer kernel}, that is, the matrix formed by $K_{i,j} = K(\bar{x}_i, \bar{x}_j), i, j \in \{1, 2, \cdots, m\}$ is positive semidefinite for any finite set of distinct data points $\bar{x}_i, i \in \{1,2,\cdots, m\}$. We denote the generated RKHS by kernel $K$ as ${\cal H}_K$. Hence, the regression algorithm we investigate is a Tikhonov regularization scheme associated with $K$, that is, 
\be\label{eq_def_TikhonovLearning}
f_{z,\lambda}\!\eqdef\!\argmin_{f\in {\cal H}_K} \left\{ \frac{1}{m} \sum_{i=1}^m (f(x_i) \!-\! y_i)^2 + \lambda \|f\|_K^2 \!\!\right\}
\ee

The following theorem provides an upper bound for approximating $f_\rho$ through the learning scheme in~\eqref{eq_def_TikhonovLearning} under the condition that $f_\rho$ is in the range of $L_K^{r}$.
\begin{theorem}~\cite{SZ_CA_07}\label{theorem_error_bound} 
	Let $\bar{z}$ be randomly drawn according to $\rho$ and satisfy $|\bar{y}| \le M$ almost surely. 
	Assume that, for some $0 < r \le 1$, $f_\rho$ is in the range of $L_k^{r}$. 
	Take the regularization parameter as
	$\lambda 
	= \left( 3 \kappa M/{\| L_K^{-r} f_\rho \|_{\rho}} \right)^{2/(1 + 2r)}
	\, m^{-1/(1 + 2r)}$, where $\kappa \eqdef \sqrt{\textnormal{sup}_{x\in {\bar{\cal X}}}K(x, x)}$.
	Then, for any $0 < \delta < 1$, with confidence $1 - \delta$, 
	\begin{equation}
		\label{eq44}
		\begin{aligned}
			&\| f_{z, \lambda} - f_\rho \|_{K}\\
			&\le  4 \log\!\left(\frac{2}{\delta}\right)
			(3 \kappa M)^{\frac{2r - 1}{2r + 1}} \| L^{-r}_{K} f_\rho \|_{\rho}^{2/(1+2r)}m^{-\frac{2r-1}{4r+2}}.
		\end{aligned}
	\end{equation}
\end{theorem}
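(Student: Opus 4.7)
The plan is to follow the standard bias--variance (approximation--sample) decomposition of Smale--Zhou. Introduce the data-independent regularized target
\begin{equation*}
f_\lambda \;\eqdef\; \argmin_{f\in\mathcal{H}_K}\left\{\int_{\bar{\mathcal{X}}}(f(x)-f_\rho(x))^2 d\rho_{\bar{\mathcal{X}}}(x) + \lambda\|f\|_K^2\right\} \;=\; (L_K+\lambda I)^{-1}L_K f_\rho,
\end{equation*}
and apply the triangle inequality $\|f_{z,\lambda}-f_\rho\|_K \le \|f_{z,\lambda}-f_\lambda\|_K + \|f_\lambda-f_\rho\|_K$. The first term is the \emph{sample error}, controlled by concentration of an empirical operator/vector around its mean, and the second is the \emph{approximation (regularization) error}, controlled by the source condition $f_\rho = L_K^{r} g$ with $\|g\|_\rho = \|L_K^{-r}f_\rho\|_\rho$. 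Choosing $\lambda$ to balance the two gives the stated rate.

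First, I would bound the approximation error. Writing $f_\lambda - f_\rho = [(L_K+\lambda I)^{-1}L_K - I]\, L_K^{r} g = -\lambda (L_K+\lambda I)^{-1} L_K^{r} g$ and using the operator inequality $\|\lambda(L_K+\lambda I)^{-1}L_K^{r-1/2}\|\le \lambda^{r+1/2}\cdot\sup_{t\ge 0}\frac{t^{r-1/2}}{t+\lambda}\le \lambda^{r-1/2}$ (valid for $r\le 1$ by elementary calculus on the spectrum), one obtains
\begin{equation*}
\|f_\lambda - f_\rho\|_K \;\le\; \lambda^{\,r-1/2}\,\|L_K^{-r}f_\rho\|_\rho ,
\end{equation*}
which is finite and decreasing in $\lambda$ precisely under the hypothesis $r>1/2$ needed for convergence in the $K$-norm.

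Next, I would control the sample error. Using $f_{z,\lambda}-f_\lambda = (L_{K,\mathbf{x}}+\lambda I)^{-1}[(S_\mathbf{x}^*\mathbf{y}/m - L_K f_\rho) - (L_{K,\mathbf{x}}-L_K)f_\lambda]$, where $L_{K,\mathbf{x}}=\frac{1}{m}\sum_i K_{x_i}\otimes K_{x_i}$ is the empirical covariance, the analysis reduces to bounding two $\mathcal{H}_K$-valued empirical averages of i.i.d.\ bounded random variables. Applying the Pinelis--Bernstein inequality in Hilbert space (the form used in Smale--Zhou) to each piece, the $K$-norm of each centered average is at most $C\kappa M\,\log(2/\delta)/\sqrt{m}$ with confidence $1-\delta/2$; after multiplying by $\|(L_{K,\mathbf{x}}+\lambda I)^{-1}\|\le \lambda^{-1}$, one arrives at a bound of the order
\begin{equation*}
\|f_{z,\lambda}-f_\lambda\|_K \;\le\; \frac{C\,\kappa M\,\log(2/\delta)}{\sqrt{m}\,\lambda},
\end{equation*}
where the factor $3\kappa M$ records the explicit constants appearing in the final statement.

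Finally, I would combine the two bounds and optimize. Adding the approximation and sample contributions, the total bound has the schematic form $A\lambda^{r-1/2} + B\lambda^{-1}m^{-1/2}\log(2/\delta)$; differentiating and equating gives $\lambda \asymp (B/A)^{2/(1+2r)} m^{-1/(1+2r)}$, which is exactly the prescription $\lambda = (3\kappa M/\|L_K^{-r}f_\rho\|_\rho)^{2/(1+2r)}m^{-1/(1+2r)}$ in the statement. Substituting back produces the exponent $m^{-(2r-1)/(4r+2)}$ and the product of factors $(3\kappa M)^{(2r-1)/(2r+1)}\|L_K^{-r}f_\rho\|_\rho^{2/(1+2r)}$, with the residual constant absorbed into the factor $4\log(2/\delta)$ after a union bound over the two concentration events. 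The main obstacle is the sample-error step: one must show that $(L_{K,\mathbf{x}}+\lambda I)^{-1}$ can be replaced by $(L_K+\lambda I)^{-1}$ up to a multiplicative constant with high probability, which requires an additional concentration argument for the self-adjoint operator $L_{K,\mathbf{x}}-L_K$ and a careful tracking of the dependence on $\lambda$ so that no spurious factor spoils the final rate.
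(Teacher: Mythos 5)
The paper does not actually prove this statement: it is quoted verbatim from Smale and Zhou \cite{SZ_CA_07} and used as a black box, so there is no in-paper proof to compare against. Judged on its own terms, your sketch is a faithful reconstruction of the argument in that reference: the decomposition through the regularized population target $f_\lambda=(L_K+\lambda I)^{-1}L_K f_\rho$, the approximation bound $\|f_\lambda-f_\rho\|_K\le\lambda^{r-1/2}\|L_K^{-r}f_\rho\|_\rho$ from the source condition, the Pinelis--Bernstein control of centered Hilbert-space-valued averages giving a sample error of order $\kappa M\log(2/\delta)/(\lambda\sqrt{m})$, and the balancing step that produces the stated $\lambda$ and the rate $m^{-(2r-1)/(4r+2)}$ are exactly the ingredients of the cited theorem. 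Two corrections are worth recording. First, your intermediate inequality in the approximation step is garbled: the prefactor should be $\lambda$, not $\lambda^{r+1/2}$; the correct chain is $\lambda\,\sup_{t\ge0} t^{r-1/2}/(t+\lambda)\le\lambda\cdot\lambda^{r-3/2}=\lambda^{r-1/2}$, valid for $r\ge 1/2$ (and indeed the theorem is vacuous for $r\le 1/2$ since the rate does not decay; the source assumes $r>1/2$ even though the transcription here says $0<r\le 1$, and the paper only ever invokes the result with $r=1$). Second, the ``main obstacle'' you flag at the end is not an obstacle for this theorem: Smale--Zhou never replace $(L_{K,\mathbf{x}}+\lambda I)^{-1}$ by $(L_K+\lambda I)^{-1}$; they use the crude bound $\|(L_{K,\mathbf{x}}+\lambda I)^{-1}\|\le\lambda^{-1}$, exactly as you already did, and they further fold your two centered averages into the single random variable $\xi=(y-f_\lambda(x))K_x$ with mean $L_K(f_\rho-f_\lambda)=\lambda f_\lambda$, which simplifies the bookkeeping. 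The operator-concentration refinement you describe is only needed for sharper capacity-dependent rates, not for this bound.
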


The above Theorem provides a quantitative learning rate for the KRR approximation in the RKHS. In our context, the operators $\Upsilon_0, \Upsilon_1$ are estimated through such regressions as in~\eqref{1007_2}. 

The function $f_{\rho}$ that we are approximating is $f(\cdot) = \| \cdot \|^2$ as in~\eqref{eq39}. 
In the proposed dissipativity learning in RKHS, we estimate the Hilbert-Schmidt operators $\Upsilon_0$ and $\Upsilon_1$ by Kernel ridge regression (KRR) so that $\< \phi(y), \Upsilon_0\phi(y)\> = \|y\|^2$ and $\< \phi(u), \Upsilon_1\phi(u)\> = \|u\|^2$. Let the learned operators be $\hat{\Upsilon}_0$ and $\hat{\Upsilon}_1$ and suppose that, for some norms on functions of $y$ and $u$
\begin{align}\label{eq39}
    \left\lVert\< \phi(y), \hat{\Upsilon}_0\phi(y)\> - \|y\|^2\right\rVert \leq & \epsilon_0, \\
    \left\lVert\< \phi(u), \hat{\Upsilon}_0\phi(u)\> - \|u\|^2\right\rVert \leq & \epsilon_1.
\end{align}

Note that $\Gamma_t \eqdef \phi(z_{t}) \otimes \phi(z_{t}) \in {\textnormal{HS} }({\cal H}_z) $ and $\Delta_t \eqdef \psi(x_{t}) \otimes \psi(x_{t})  - \psi(x_{t+1}) \otimes \psi(x_{t+1})  \in {\textnormal{HS} }({\cal H}_x) $.
Denote by $(\hat{\Pi}, \hat{P}, \hat{\rho})$ any feasible solution of the optimization so that 
\be\label{eq444}
\<\hat{\Pi}, \Gamma_t\> + \<\hat{P}, \Delta_t\> \geq \hat{\rho}, \forall t.
\ee
Then, when generalized to the entire population, $\Gamma$ and $\Delta$ are random operators in ${\cal H}_z$ and ${\cal H}_x$, respectively. We assume that with probability $1-\delta$, 
\begin{align}\label{eq41} 
\<\hat{\Pi}, \Gamma\> + \<\hat{P}, \Delta\> \geq \hat{\rho} - \epsilon_{\rho}, 
\end{align}
for some generalization error $\epsilon_{\rho} \geq 0$. 
Therefore, from~\eqref{eq36} and~\eqref{eq39}, the propagation of operator errors to the supply bound is
\be\label{eq40}
\begin{aligned}
& \<  \phi(z_t), \hat{\Pi} \phi(z_t) \> \\
\leq  &- \< \phi(y_t), \hat{\Upsilon}_0\phi(y_t)\> +  \alpha \< \phi(u_t), \hat{\Upsilon}_1\phi(u_t) \>  
\leq   -\|y\|^2 +\alpha \|u\|^2 + \epsilon_0 + \epsilon_1.
\end{aligned}
\ee

Assume $\|\Gamma\|\geq 1$ and~\eqref{eq41} holds with probability $1-\delta$. 
Let
\be\label{eq43}
\Pi = \hat{\Pi} + \max(\hat{\rho} - \epsilon_{\rho}, 0)I,
\ee
where $I$ is the identity on ${\cal H}_z$. Then,
\be\label{eq44}
\< \Pi, \Gamma\> + \<\hat{P}, \Delta \> \geq 0.
\ee
Combining~\eqref{eq44} with~\eqref{eq40} yields that, for any $(y, u)$,
\be\label{eq_def_totalerror}
\begin{aligned}
\<  \phi(z),  \Pi  \phi(z) \> \leq  
-\|y\|^2 +\alpha \|u\|^2 + \bar{\epsilon},
\end{aligned}
\ee
where $\bar{\epsilon} \eqdef \epsilon + \max(\hat{\rho} - \epsilon_{\rho}, 0)$.
\begin{remark} 
The total error $\bar{\epsilon}$ in~\eqref{eq_def_totalerror} decomposes into: (i) approximation errors
$\epsilon$ from approximating via KRR, and (ii) the generalization error $\epsilon_\rho$ 
from empirical margin $\hat{\rho}$~in \eqref{eq444} to a population guarantee.
\end{remark}

The following theorem provides the total error bound for $\bar{\epsilon}$.
\begin{theorem}
    Let Assumption~\ref{assump_general_K_theta},~\ref{assum_f_theta_in_range},~\ref{assum_f_theta_lim} holds. Let $\bar{z}$ be randomly drawn according to $\rho$ and satisfy $|\bar{y}| \le M$ almost surely. For any $1<\delta<1$, with confidence $1-\delta$, the total approximation error of the proposed dissipativity learning in RKHS is bounded as
    \begin{align}
    \nonumber
    \bar{\epsilon} \leq & \inf_{\epsilon > 0}\{2\epsilon + 4 \log\!\left(\!\frac{2}{\delta}\!\right)\!
(3 \kappa M)^{\frac{1}{3}} c_{K} c_{f, s} \|l_{\pi, s, \theta }\|_{L^2}  m^{-\frac{1}{6}}\} + \frac{1}{m} \,
\mathcal{O}\!\left(
    \log \frac{1}{\delta}
      +   \epsilon^{\frac{1}{2}} \log m
      +   \frac{1}{\epsilon^2} \log \frac{1}{\epsilon}
\right)\!.
    \end{align}
\end{theorem}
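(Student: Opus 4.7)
The plan is to decompose $\bar{\epsilon}$ into an \emph{approximation} component, coming from the KRR step that fits $\hat{\Upsilon}_0$ and $\hat{\Upsilon}_1$ to the targets $\|y\|^2$ and $\|u\|^2$, and a \emph{generalization} component, coming from the passage of the empirical dissipativity margin $\hat{\rho}$ to its population counterpart. By the definition following~\eqref{eq_def_totalerror}, $\bar{\epsilon} = \epsilon + \max(\hat{\rho} - \epsilon_{\rho},0)$, where $\epsilon \geq \max(\epsilon_0,\epsilon_1)$ absorbs the two KRR errors in~\eqref{eq39}. The factor $2\epsilon$ in the stated bound then arises because each of $\hat{\Upsilon}_0$ and $\hat{\Upsilon}_1$ contributes one KRR error term of size $\epsilon$, and both appear additively in~\eqref{eq40}.

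For the approximation component, I would apply Theorem~\ref{theorem_error_bound} separately to the two regressions $\phi(y)\mapsto\|y\|^2$ and $\phi(u)\mapsto\|u\|^2$, under the regularity assumption that the targets $f_\theta \in \mathrm{Range}(L_K^{r})$ with the admissible $r$ given by Assumptions~\ref{assump_general_K_theta}--\ref{assum_f_theta_lim}. Specializing the cited bound to $r=1$ produces the exponents $(2r-1)/(2r+1) = 1/3$ and $(2r-1)/(4r+2) = 1/6$, exactly matching the $(3\kappa M)^{1/3}$ prefactor and $m^{-1/6}$ rate in the displayed inequality; the constants $c_K$, $c_{f,s}$, and $\|l_{\pi,s,\theta}\|_{L^2}$ collect $\kappa$, the source-condition norm $\|L_K^{-r}f_\rho\|_\rho$, and the intrinsic $L^2$ factors from the assumed regularity class. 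This step gives the first summand, holding with confidence $1-\delta/2$ after a union bound across the two regressions; the resulting error is then what $\epsilon$ must dominate inside the infimum.

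For the generalization component I would bound $\epsilon_\rho$ by uniform convergence of the empirical mean
$\frac{1}{T}\sum_{t=1}^T(\langle\hat{\Pi},\Gamma_t\rangle + \langle\hat{P},\Delta_t\rangle)$
to its expectation over the class of admissible operators $(\hat{\Pi},\hat{P})$ arising from the finite-dimensional convex program. Since $\Gamma_t$ and $\Delta_t$ are rank-at-most-two Hilbert--Schmidt operators (Remark~\ref{remark_Gamma_Delta}) with bounded HS-norm under the compactness of $\mathcal{X}$ and continuity of the feature maps, a Bernstein-type concentration inequality for Hilbert-space-valued random variables, combined with a covering-number argument over a ball of radius controlled by the regularizer in~\eqref{eq_def_nonparam_obj}, yields a bound of order $m^{-1}\bigl(\log(1/\delta) + \log \mathcal{N}(\epsilon,\cdot)\bigr)$. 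Standard RKHS covering-number estimates produce the dominant terms $\epsilon^{1/2}\log m$ and $\epsilon^{-2}\log(1/\epsilon)$, giving precisely the $\mathcal{O}$-expression in the second summand. Taking the infimum over $\epsilon>0$ then optimizes the approximation/generalization trade-off.

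The main obstacle is the last step: obtaining the stated logarithmic terms requires an effective covering-number bound for the feasible set of operator pairs $(\hat{\Pi},\hat{P})$ in the HS-ball together with a Bernstein-type inequality whose variance proxy must be controlled in terms of the kernel's spectral decay; unlike scalar KRR, here the random objects are operators and the constraint couples $\hat{\Pi}$, $\hat{P}$, and the slack $\hat{\rho}$, so the naive union bound is too loose. The cleanest route is probably to work within the finite-dimensional parameterization $(p,\pi)$ induced by the representer reduction, derive a data-dependent Rademacher complexity scaling like $\sqrt{\mathrm{tr}(\Phi^2)+\mathrm{tr}(\Psi^2)}/m$, and then convert to the high-probability bound via Talagrand's inequality, which naturally produces the $\epsilon^{1/2}\log m$ and $\epsilon^{-2}\log(1/\epsilon)$ contributions.
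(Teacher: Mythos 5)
Your two-part decomposition (KRR approximation error for $\hat{\Upsilon}_0,\hat{\Upsilon}_1$ plus generalization of the empirical margin $\hat{\rho}$) is exactly the architecture of the paper's proof, which simply invokes Sections~\ref{sec_error1} and~\ref{sec_error2}. The specialization of Theorem~\ref{theorem_error_bound} to $r=1$, giving the exponents $1/3$ and $m^{-1/6}$, is also correct. However, there is a genuine gap in your treatment of the approximation component. The central obstacle in Section~\ref{sec_error1} is that the target $f(x)=\|x\|^2$ is \emph{not} in $\operatorname{Range}(L_K^{r})$ (the paper argues $f\in\mathcal{C}^1$ only, while range membership forces $\mathcal{C}^s$ regularity inherited from the kernel), so Theorem~\ref{theorem_error_bound} cannot be applied to $f$ directly. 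The paper's fix is to construct the mollified surrogate $f_\theta = L_{K_\theta} f$, prove $f_\theta\in\operatorname{Range}(L_K)$ (Proposition~\ref{prop_f_theta_in_range}, which is where $c_K c_{f,s}\|l_{\pi,s,\theta}\|_{L^2}$ enters as a bound on $\|L_K^{-1}f_\theta\|$) and $\|f_\theta-f\|_\infty\le\epsilon$ (Proposition~\ref{prop_f_theta_lim}), and then use the triangle inequality $\|f-\hat{f}^{\textnormal{KRR}}\|_\infty \le \|f-f_\theta\|_\infty + \|\hat f-\hat f_\theta\|_\infty + \|f_\theta-\hat f_\theta\|_\infty$. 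The $2\epsilon$ in the stated bound comes from the first two terms of this triangle inequality, \emph{not}, as you claim, from adding one $\epsilon$-sized error for each of the two regressions on $\phi(y)$ and $\phi(u)$. You mention $f_\theta$ but treat its range membership as an assumption rather than something that must be constructed and verified, which skips the main technical content of the section.

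On the generalization component, you propose building the bound from scratch via Hilbert-space Bernstein inequalities, covering numbers over the feasible operator pairs, and Talagrand's inequality, and you correctly flag this as the hardest step. The paper takes a much shorter route: it directly cites the known OC-SVM generalization theorem (Section~\ref{sec_error2}) and its translation into the form $\frac{1}{m}\mathcal{O}(\log\frac{1}{\delta}+\epsilon^{1/2}\log m+\epsilon^{-2}\log\frac{1}{\epsilon})$. Your plan, if carried out, would be a more self-contained derivation, but as written it is a sketch with an acknowledged missing step rather than a proof; you should either cite the OC-SVM covering-number result as the paper does or actually supply the concentration argument.
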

\begin{proof}
    The proof follows from characterizing the error bound for KRR and the generalization error in Section~\ref{sec_error1} and in Section~\ref{sec_error2}, respectively. 
\end{proof}
 
\subsection{Approximation Error from KRR}\label{sec_error1}
\begin{assumption}\label{assump_general_K_theta}
	Assume that $K_{\theta}(x, y)$ is a radial kernel and $\int_{\mathbb{R}^d}K(x)dx = 1$
\end{assumption}
\begin{assumption}\label{assum_f_theta_in_range}
	Suppose that (1) $\left| {\cal F}\{f\}(\xi )\right| \leq \frac{c_{f, s}}{(1+ |\xi|^2)^{s/2}}$ almost everywhere $\forall \xi \in \mathbb{R}^n$; (2) there exists $l_{\pi, s, \theta} \in L^2(\mathbb{R}^d)$, such that $\int_{\mathbb{R}^d} \frac{\left|{\cal F}\{\pi^{\theta}\}(\xi -\eta)\right|}{(1+\|\eta\|^2)^{s/2}}d\eta \leq \left|{\cal F}\{K\}(\xi)\right| \cdot \left|l_{\pi, s, \theta}(\xi)\right|$ holds; (3) $\sup_{\xi \in \mathbb{R}^d}\left|{\cal F}\{ K_\theta\} (\xi) \right| = c_{K} < \infty$.
\end{assumption}

\begin{assumption}\label{assum_f_theta_lim}
	Suppose that (1)$\int_{|y-x| > \eta_\theta} |K_\theta(x-y)|dy < \frac{\epsilon}{6M_f M_\pi^\theta}$, where $M_f \eqdef \sup_{x \in \textnormal{supp}\pi} |f(x)|$, $M_\pi^\theta = \textnormal{sup}_{x\in \mathbb{R}^d} |\pi(x)|$; (2) when $\theta$ is sufficiently small and $|y-x| \leq \eta_\theta$, $|f(y) - f(x)|\leq L_f \eta_\theta$ holds for all $x \in \textnormal{supp} \pi$ and $\left|(\pi(y))^\theta - (\pi(x))^\theta\right| \leq L_\pi \eta_\theta$ holds for all $x \in \textnormal{supp} f$; (3) $\int_{\mathbb{R}^d} |K(t)| dt \leq M_K$; (4) $\forall x \in \textnormal{supp} f$, $|\pi(x)| \in [m_\pi, M_\pi]$, and $\theta \leq \{\frac{\log (1- \frac{\epsilon}{3M})}{\log m_\pi}, \frac{\log (1+ \frac{\epsilon}{3M})}{\log M_\pi}\}$, $M \geq \left|f(x)\right|$.
\end{assumption}

The following theorem provides the approximation error from KRR.
\begin{theorem}\label{theorem_error_bound_KRR}
    Let Assumption~\ref{assump_general_K_theta},~\ref{assum_f_theta_in_range},~\ref{assum_f_theta_lim} holds. Let $\bar{z}$ be randomly drawn according to $\rho$ and satisfy $|\bar{y}| \le M$ almost surely. Then, $\forall \epsilon > 0$, $\exists \theta_\epsilon \in (0, 1)$ such that $f_{\theta_\epsilon} \in \operatorname{Range}(L_K)$ and
    $ \lim_{\theta_\epsilon \to 0} \|f_{\theta_\epsilon} - f \|_{\infty} = \epsilon$ hold. 
Moreover, for any $0 < \delta < 1$, with confidence $1 - \delta$, 
  the KRR approximation error is bounded as follows:
    \begin{align}\label{eq_KRR_bound}
    &\|f - \hat{f}^{\textnormal{KRR}}\|_\infty \\
    \nonumber
    &\leq \inf_{\epsilon > 0}\{2\epsilon + 4 \log\!\left(\!\frac{2}{\delta}\!\right)\!
(3 \kappa M)^{\frac{1}{3}} c_{K} c_{f, s} \|l_{\pi, s, \theta }\|_{L^2}  m^{-\frac{1}{6}}\},
\end{align}
where $\kappa \eqdef \sqrt{\textnormal{sup}_{x\in {\bar{\cal X}}}K(x, x)}$. 
\end{theorem}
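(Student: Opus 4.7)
The plan is to split $\|f - \hat{f}^{\textnormal{KRR}}\|_\infty$ into a deterministic approximation error and a probabilistic estimation error by inserting an intermediate function $f_{\theta_\epsilon} \in \operatorname{Range}(L_K)$ parametrized by a smoothing scale $\theta_\epsilon \in (0,1)$. The triangle inequality gives $\|f - \hat{f}^{\textnormal{KRR}}\|_\infty \le \|f - f_{\theta_\epsilon}\|_\infty + \|\hat{f}^{\textnormal{KRR}} - f_{\theta_\epsilon}\|_\infty$, and taking the infimum over $\epsilon > 0$ at the end absorbs the $\theta_\epsilon$-dependence.

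For the first claim, that $f_{\theta_\epsilon} \in \operatorname{Range}(L_K)$ with $\|f_{\theta_\epsilon} - f\|_\infty \to \epsilon$, I would work in the Fourier domain exploiting the radial kernel structure of Assumption~\ref{assump_general_K_theta}. The pointwise Fourier decay $|{\cal F}\{f\}(\xi)| \le c_{f,s}(1+|\xi|^2)^{-s/2}$ combined with the dominance $\int |{\cal F}\{\pi^{\theta}\}(\xi-\eta)|(1+\|\eta\|^2)^{-s/2}\,d\eta \le |{\cal F}\{K\}(\xi)|\,|l_{\pi,s,\theta}(\xi)|$ from Assumption~\ref{assum_f_theta_in_range} exhibits $f_{\theta_\epsilon}$ as $L_K$ applied to an explicit element whose $\rho$-norm is bounded via Parseval and Cauchy--Schwarz by $c_{f,s}\|l_{\pi,s,\theta_\epsilon}\|_{L^2}$. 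The uniform approximation then follows from a standard mollifier argument: split the convolution at scale $\eta_\theta$, use the Lipschitz-type bounds on $f$ and $\pi^\theta$ in Assumption~\ref{assum_f_theta_lim}(2) for the bulk, the tail estimate in (1) for the complement, and (3)--(4) to keep the intermediate quantities uniformly bounded; for $\theta_\epsilon$ small enough the total is within $\epsilon$.

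With $f_{\theta_\epsilon}$ constructed, the second term is handled by Theorem~\ref{theorem_error_bound} applied with source-condition parameter $r = 1$, which yields precisely the exponents $(2r-1)/(2r+1) = 1/3$ on $3\kappa M$ and $-(2r-1)/(4r+2) = -1/6$ on $m$. The RKHS-to-sup-norm embedding $\|\cdot\|_\infty \le \kappa\|\cdot\|_K$, together with the pre-image estimate $\|L_K^{-1} f_{\theta_\epsilon}\|_\rho \le c_{f,s}\|l_{\pi,s,\theta_\epsilon}\|_{L^2}$ and the uniform Fourier bound $c_K = \sup_\xi|{\cal F}\{K_\theta\}(\xi)|$ from Assumption~\ref{assum_f_theta_in_range}(3), converts the RKHS bound into the stated sup-norm form, producing the prefactor $4\log(2/\delta)(3\kappa M)^{1/3} c_K c_{f,s}\|l_{\pi,s,\theta}\|_{L^2} m^{-1/6}$. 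Adding the two summands and passing to the infimum over $\epsilon > 0$ closes the argument.

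The main obstacle is the construction in the middle step: the two requirements — approximating $f$ in $L^\infty$ and lying in $\operatorname{Range}(L_K)$ with a controllable pre-image norm — pull in opposite directions, since coarser smoothing helps the range condition while finer smoothing helps uniform approximation. Assumptions~\ref{assum_f_theta_in_range} and~\ref{assum_f_theta_lim} are custom-tailored to reconcile this tension, but coordinating the Fourier-side domination, the mollifier estimate, and the $\theta_\epsilon$-dependence so that the final infimum recovers the clean $m^{-1/6}$ rate with exactly the stated constants will require careful bookkeeping of how each parameter enters the bound.
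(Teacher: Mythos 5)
Your overall strategy coincides with the paper's: both construct the smoothed intermediate $f_{\theta_\epsilon}=L_{K_{\theta_\epsilon}}f$, establish $f_{\theta_\epsilon}\in\operatorname{Range}(L_K)$ with the pre-image bound $\|L_K^{-1}f_{\theta_\epsilon}\|_{L_\rho^2}\le c_K c_{f,s}\|l_{\pi,s,\theta}\|_{L^2}$ by the same Fourier-domain domination argument (Proposition~\ref{prop_f_theta_in_range}), prove the uniform approximation by the same mollifier split at scale $\eta_\theta$ (Proposition~\ref{prop_f_theta_lim}), and invoke Theorem~\ref{theorem_error_bound} with $r=1$ to obtain the exponents $1/3$ and $-1/6$.

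There is, however, a genuine gap in your decomposition. You write the two-term triangle inequality $\|f-\hat{f}^{\textnormal{KRR}}\|_\infty\le\|f-f_{\theta_\epsilon}\|_\infty+\|\hat{f}^{\textnormal{KRR}}-f_{\theta_\epsilon}\|_\infty$ and then apply Theorem~\ref{theorem_error_bound} to the second term. But Theorem~\ref{theorem_error_bound} bounds the distance between a KRR estimator and the \emph{regression function of the data it was trained on}; your $\hat{f}^{\textnormal{KRR}}$ is trained on samples whose regression function is $f=\|\cdot\|^2$, not $f_{\theta_\epsilon}$, and $f$ itself fails the source condition --- which is the entire reason the detour through $f_{\theta_\epsilon}$ is needed. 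So the cited theorem does not license a bound on $\|\hat{f}^{\textnormal{KRR}}-f_{\theta_\epsilon}\|$. The paper instead uses the three-term split $\|f-f_\theta\|_\infty+\|\hat{f}-\hat{f}_\theta\|_\infty+\|f_\theta-\hat{f}_\theta\|_\infty$ in~\eqref{eq_total_error}, where $\hat{f}_\theta$ is the KRR estimator one would obtain from data generated by $f_\theta$: Theorem~\ref{theorem_error_bound} applies legitimately to the last term, and the stability term $\|\hat{f}-\hat{f}_\theta\|_\infty$, controlled by the $\epsilon$-perturbation of the labels, is what produces the second $\epsilon$ in the stated bound $2\epsilon+\cdots$; your decomposition yields only $\epsilon+\cdots$ precisely because it skips this step. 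To repair your argument, introduce $\hat{f}_{\theta_\epsilon}$ and justify $\|\hat{f}^{\textnormal{KRR}}-\hat{f}_{\theta_\epsilon}\|_\infty\le\epsilon$, e.g.\ via the Lipschitz dependence of the Tikhonov solution~\eqref{eq_def_TikhonovLearning} on the labels. A minor further point: converting the $\|\cdot\|_K$ bound of Theorem~\ref{theorem_error_bound} to $\|\cdot\|_\infty$ via $\|g\|_\infty\le\kappa\|g\|_K$, which you correctly flag as necessary, introduces an extra factor of $\kappa$ that neither your final constant nor the paper's tracks.
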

\begin{proof}
    The function that we are approximating is $f(\cdot) = \| \cdot \|^2$ as in~\eqref{eq39}. Suppose $f(x) \in \operatorname{Range}(L_K^{r})$. Then, there exists $\varphi \in L_{\rho}^2$ such that $f(x) = \int_{\mathbb{R}^d}K(x, y)\varphi(y)d\rho(y)$. Equivalently, $f \in {\cal C}^s$ follows from $K(\cdot, \cdot) \in {\cal C}^s$. Note that $f(x) = \|x\|^2 \in {\cal C}^1$ holds for all $K(\cdot, \cdot)$. Hence, $f(x) \notin \operatorname{Range}(L_K^{r})$.
    Consequently, the induced approximation error $\epsilon_0, \epsilon_1$ when approximating $\Upsilon_0$ and $\Upsilon_1$ in~\eqref{eq39} are not fully aligned with Theorem~\ref{theorem_error_bound}. 
    To this end, we break the proof into two parts. Firstly, we construct a parameterized function $f_\theta$ such that $f_\theta \in \operatorname{Range}(L_K^{r})$, and then, ensure that the function $f_\theta$ satisfy $\lim_{\theta \to 0} \|f_{\theta} - f \|_{\infty} = \epsilon, \forall \epsilon > 0$. 
    From the definition of an integral operator $L_K: {\cal H}_K \rightarrow {\cal H}_K$
\be\label{eq_def_operatorL_K}
L_K(f_{\rho})(x) \eqdef \int_{\cal X} K(x, y)f_{\rho}(y)d\rho_X(y),
\ee
the range of $L_k^{r}$ is
\begin{equation}\label{eq_def_rangeLk}
\operatorname{Range}(L_K^{r}) \eqdef
\{\, f = L_K^{r} g \;|\; g \in \mathcal{H}_K \,\}.
\end{equation}
Thus, we aim to construct 
\be\label{eq_def_f_theta}
\begin{aligned}
    & f_\theta \eqdef L_{K_\theta} f \\
    & \textnormal{s.t.} \ \ f_\theta\in \operatorname{Range}(L_K),\\
    & \lim_{\theta \to 0} \|f_{\theta} - f \|_{\infty} = \epsilon, \forall \epsilon > 0
\end{aligned}
\ee
where 
$L_{K_\theta} f \eqdef \int  K_{\theta}(x, y)\, f(y) d \rho(y)$, 
$d\rho(x) = (\pi(x))^\theta dx, \theta \in (0, 1]$, and $\pi(x)\geq 0$ is measurable on $\mathbb{R}^d$.
Then, the approximation error is bounded as follows:
    \be\label{eq_total_error}
    \|f -\hat{f}^{\textnormal{KRR}}\|_{\infty} \leq \|f-f_\theta\|_{\infty} + \|\hat{f}-\hat{f}_\theta\|_{\infty}+ \|f_\theta -\hat{f}_\theta\|_{\infty}.
    \ee
    Particularly, the assumptions such that $f_\theta\in \operatorname{Range}(L_K)$ and $\lim_{\theta \to 0} \|f_{\theta} - f \|_{\infty} = 0$ hold in Section~\ref{sec_f_theta_range} and Section~\ref{sec_f_theta_limit}, respectively. Then, we jointly evaluate a resulting bound for~\eqref{eq_total_error} in~\eqref{eq_KRR_bound}. This completes the proof.
\end{proof}

The following Corollary is for the case when the Kernel for KRR approximation is a Gaussian kernel. 
\begin{corollary}
    A Gaussian kernel satisfies 
        $\|l_{\pi, s, \theta_\epsilon}\| = C_1 \theta_\epsilon^{-d} \Pi_{i=1}^d \int_{\mathbb{R}} \left|\textnormal{exp}\left( -\frac{-\sigma^2}{\theta_\epsilon}(\xi_i - i)^2 + h^2 \xi_i^2\right)\right| d\xi_i$, that is $\|l_{\pi, s, \theta_\epsilon}\| \leq C_2 \frac{1}{\theta_\epsilon^d} \left(\frac{\sigma^2}{\theta_\epsilon} h^2 \right)^{-d/2} \leq C_3 \theta_\epsilon^{-d/2}  \leq C_4 \epsilon^{-d/2}.$ The corresponding KRR approximation error is
        \be
        \|f -\hat{f}^{\textnormal{KRR}}\|_\infty \leq \inf_{\epsilon > 0} \{2\epsilon + \frac{C \log(2/\delta) m^{-1/6}}{\epsilon^{d/2}}\}.
        \ee
\end{corollary}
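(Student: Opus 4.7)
The plan is to explicitly evaluate the $L^2$ norm $\|l_{\pi,s,\theta_\epsilon}\|$ appearing in Theorem~\ref{theorem_error_bound_KRR} using the closed-form Fourier transforms of Gaussians, and then substitute the resulting bound into~\eqref{eq_KRR_bound}. The defining inequality in Assumption~\ref{assum_f_theta_in_range} identifies $l_{\pi,s,\theta}(\xi)$ with the ratio between the convolution-type integral $\int_{\mathbb{R}^d} |\mathcal{F}\{\pi^\theta\}(\xi-\eta)|(1+\|\eta\|^2)^{-s/2}\, d\eta$ and $|\mathcal{F}\{K\}(\xi)|$. The whole first task therefore reduces to evaluating this ratio in closed form for the Gaussian case and showing that, after taking the $L^2$ norm in $\xi$, the resulting expression scales as $\theta_\epsilon^{-d/2}$.

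Concretely, I would fix a Gaussian sampling density $\pi(x) \propto \exp(-\sigma^2 \|x\|^2)$ and a Gaussian reproducing kernel $K(x) \propto \exp(-h^2 \|x\|^2)$. Both Fourier transforms are Gaussian, and $\pi^\theta$ remains Gaussian with inverse variance $\sigma^2/\theta$. Because the integrands separate across coordinates, a coordinate-wise Gaussian integral produces exactly the product formula listed first in the corollary statement. Completing the square in each factor and using the standard identity $\int_{\mathbb{R}} \exp(-a \xi^2)\, d\xi = \sqrt{\pi/a}$ yields the chain $\|l_{\pi,s,\theta_\epsilon}\| \leq C_2\, \theta_\epsilon^{-d}(\sigma^2 h^2/\theta_\epsilon)^{-d/2} \leq C_3\, \theta_\epsilon^{-d/2}$. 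The Bessel-potential weight $(1+\|\eta\|^2)^{-s/2}$ is dominated by $1$ over the effective region of integration and plays no role in the scaling.

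Next, I would invoke the scaling $\theta_\epsilon = \Theta(\epsilon)$ implicit in the construction $f_{\theta_\epsilon} = L_{K_{\theta_\epsilon}} f$ inside Theorem~\ref{theorem_error_bound_KRR}. The constraint in Assumption~\ref{assum_f_theta_lim}(4), namely $\theta \leq \min\{\log(1-\epsilon/(3M))/\log m_\pi,\, \log(1+\epsilon/(3M))/\log M_\pi\}$, linearizes for small $\epsilon$ to $\theta_\epsilon = \Theta(\epsilon)$, converting $C_3\, \theta_\epsilon^{-d/2}$ into $C_4\, \epsilon^{-d/2}$. Substituting this into~\eqref{eq_KRR_bound} while collapsing the constants $\kappa, M, c_K, c_{f,s}$ into a single $C$ then delivers the stated estimate $\|f - \hat{f}^{\textnormal{KRR}}\|_\infty \leq \inf_{\epsilon>0}\{2\epsilon + C\log(2/\delta)\, m^{-1/6}/\epsilon^{d/2}\}$.

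The main obstacle lies in the translation between the kernel bandwidth $\theta_\epsilon$ and the target approximation tolerance $\epsilon$. The Gaussian Fourier integrals themselves are routine, but one must verify two bookkeeping facts: first, that the weight $(1+\|\eta\|^2)^{-s/2}$ does not inflate the Gaussian tail for $s \geq 1$, which is needed for the clean $C_3\, \theta_\epsilon^{-d/2}$ bound; and second, that the linearization $\theta_\epsilon = \Theta(\epsilon)$ is simultaneously consistent with the kernel-width inequalities in Assumption~\ref{assum_f_theta_lim}(1)--(2), so that $\|f - f_{\theta_\epsilon}\|_\infty$ and $\|f_{\theta_\epsilon} - \hat{f}_{\theta_\epsilon}\|_\infty$ are both controlled within the same tolerance used in the final infimum. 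Once these compatibilities are verified, the result follows by direct substitution.
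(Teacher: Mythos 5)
The paper states this corollary with no proof at all, so there is no official argument to compare against; your overall strategy---instantiate Assumption~\ref{assum_f_theta_in_range} with Gaussian $\pi$ and Gaussian $K$, evaluate $\|l_{\pi,s,\theta_\epsilon}\|$ by explicit Gaussian integrals, and substitute into the bound of Theorem~\ref{theorem_error_bound_KRR}---is surely the intended one. However, the central computation does not go through as you describe. Your claim that the weight $(1+\|\eta\|^2)^{-s/2}$ ``is dominated by $1$ and plays no role'' is exactly where the argument fails: bounding the weight by $1$ turns the numerator $\int_{\mathbb{R}^d}|{\cal F}\{\pi^\theta\}(\xi-\eta)|(1+\|\eta\|^2)^{-s/2}\,d\eta$ into the constant $\|{\cal F}\{\pi^\theta\}\|_{L^1}$, and even without that replacement the numerator decays only polynomially in $\|\xi\|$ (the Gaussian concentrates near $\eta=\xi$, leaving a factor $(1+\|\xi\|^2)^{-s/2}$). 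Dividing by $|{\cal F}\{K\}(\xi)|\propto e^{-h^2\|\xi\|^2}$ then yields a function growing like $e^{h^2\|\xi\|^2}$, which is not in $L^2(\mathbb{R}^d)$. Equivalently, if you actually complete the square in the displayed integral, the exponent is negative definite in the integration variable only when $\sigma^2/\theta_\epsilon>h^2$, and the residual dependence on the free variable grows exponentially, so the subsequent $L^2$ norm diverges. The finiteness of $\|l_{\pi,s,\theta_\epsilon}\|$ for a Gaussian regression kernel is therefore precisely the point requiring an argument (e.g., band-limited data distributions or a kernel $K$ with polynomially decaying Fourier transform), and the ``routine Gaussian integral'' shortcut does not supply it.

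A second, smaller gap: the final link $C_3\theta_\epsilon^{-d/2}\le C_4\epsilon^{-d/2}$ requires a \emph{lower} bound $\theta_\epsilon\ge c\epsilon$, whereas every condition in Assumption~\ref{assum_f_theta_lim} is an upper (smallness) bound on $\theta$. The repair is to \emph{choose} $\theta_\epsilon$ equal to the largest admissible value and verify that this value is of order $\epsilon$ simultaneously for conditions (1), (2) and (4); you correctly name this compatibility check as the main obstacle but leave it open, and for a Gaussian mollifier condition (1) in fact forces $\theta_\epsilon \le C\,\epsilon/\sqrt{\log(1/\epsilon)}$, so the clean $\epsilon^{-d/2}$ rate would acquire a logarithmic correction. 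Until both points are repaired, the corollary is asserted rather than proved.
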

\subsubsection{$f_\theta\in \operatorname{Range}(L_K)$}\label{sec_f_theta_range}
From the definition of integral operator as in~\eqref{eq_def_operatorL_K}, similarly, we have 
\be
L_{K_\theta} f \eqdef \int_{\mathbb{R}^d}  K_{\theta}(x, y)\, f(y)(\pi(y))^\theta dy,
\ee
for a parameterized kernel $K_\theta$.
Given $f \in L_{\rho}^2$, i.e., $\int |f(x)|^2d\rho(x) < \infty$, then $\int |f(x)\pi^{\theta/2}(x)|^2dx < \infty$, that is, $f\pi^{\theta/2} \in L^2$ and $f = \bar{f}\pi^{-\theta/2}, \bar{f} \in L^2$.
We evaluate the Fourier transform of $L_{K_\theta}f$, that is,
\be
\nonumber
\begin{aligned}
\mathcal{F}\!\left\{ L_{K_\theta} f \right\}
= & \mathcal{F}\!\left\{ \int_{\mathbb{R}^d} K_{\theta}(x, y)\, f(y)(\pi(y))^\theta d\rho(y) \right\}\\
= & \mathcal{F}\!\left\{ K_{\theta} \star (f \pi^\theta)\right\}\\
= & \mathcal{F}\{K_{\theta}\}\cdot \mathcal{F}\{f\pi^\theta\} \\
= & \mathcal{F}\{K_{\theta}\}\cdot \mathcal{F}\{\bar{f}\pi^{\theta/2}\}.
\end{aligned}
\ee
Hence 
$L_{K_\theta}f = {\cal F}^{-1} \left\{ \mathcal{F}\{K_{\theta}\}\cdot \mathcal{F}\{\bar{f}\pi^{\theta/2}\}\right\}$. Similarly, we obtain
\begin{align}
L_K^{-1} L_{K_\theta} f  
= & {\cal F}^{-1}\left\{ \frac{{\cal F} \left\{ L_{K_\theta} f \right\}}{{\cal F}\{K\}} \right\} \cdot \pi^{-\theta/2},\label{1016_7} \\
=& {\cal F}^{-1}\left\{ \frac{\mathcal{F}\{K_{\theta}\}\cdot \mathcal{F}\{\bar{f}\pi^{\theta/2}\}}{{\cal F}\{K\}} \right\} \cdot \pi^{-\theta/2},\label{1016_8} 
\end{align}
where~\eqref{1016_8} holds from plugging $L_{K_\theta}f = {\cal F}^{-1} \left\{ \mathcal{F}\{K_{\theta}\}\cdot \mathcal{F}\{\bar{f}\pi^{\theta/2}\}\right\}$ into~\eqref{1016_7}. The sufficient and necessary condition for $L_K^{-1} L_{K_\theta} f \in L_{\rho}^2$ is $L_K^{-1} L_{K_\theta} f \cdot \pi^{\theta/2}  \in L^2$, that is, 
\be\label{1016_9}
{\cal F}^{-1}\left\{ \frac{\mathcal{F}\{K_{\theta}\}\cdot \mathcal{F}\{\bar{f}\pi^{\theta/2}\}}{{\cal F}\{K\}} \right\} \in L^2.
\ee
\begin{prop}\label{prop_f_theta_in_range}
    If Assumption~\ref{assum_f_theta_in_range} holds, then, $f_\theta \in \operatorname{Range}(L_K)$ and it holds that
    $\|L_K^{-1}f_\theta\|_{L_\rho^2} \leq c_{K} c_{f, s} \|l_{\pi, s, \theta }\|_{L^2}, l_{\pi, s, \theta} \in L^2(\mathbb{R}^d)$.
\end{prop}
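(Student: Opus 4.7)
The plan is to work entirely on the Fourier side, since equation~\eqref{1016_9} has already reduced the membership $f_\theta \in \operatorname{Range}(L_K)$ to the question of whether a certain explicit quotient of Fourier transforms lies in $L^2(\mathbb{R}^d)$. So the proof will be essentially one application of Plancherel followed by two pointwise bounds controlled by the three parts of Assumption~\ref{assum_f_theta_in_range}.

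First I would record the identity (already derived in~\eqref{1016_8})
\begin{equation}
L_K^{-1} f_\theta \cdot \pi^{\theta/2} = \mathcal{F}^{-1}\!\left\{\frac{\mathcal{F}\{K_\theta\}\cdot \mathcal{F}\{\bar f\pi^{\theta/2}\}}{\mathcal{F}\{K\}}\right\},
\end{equation}
and note that $\|L_K^{-1} f_\theta\|_{L^2_\rho} = \|L_K^{-1} f_\theta \cdot \pi^{\theta/2}\|_{L^2}$ by the change of measure $d\rho = \pi^{\theta}\, dx$. By Plancherel the right-hand side equals the $L^2$-norm of the quotient itself, so the task reduces to bounding that quotient pointwise and then integrating.

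Next, since $\bar f \pi^{\theta/2} = f\pi^{\theta}$, the numerator factor $\mathcal{F}\{\bar f\pi^{\theta/2}\} = \mathcal{F}\{f\}\ast\mathcal{F}\{\pi^\theta\}$ can be written as a convolution. I would apply Assumption~\ref{assum_f_theta_in_range}(1) to pull out the decay factor $(1+|\eta|^2)^{-s/2}$ from $|\mathcal{F}\{f\}(\eta)|$, obtaining
\begin{equation}
|\mathcal{F}\{\bar f\pi^{\theta/2}\}(\xi)| \le c_{f,s}\int_{\mathbb{R}^d}\frac{|\mathcal{F}\{\pi^\theta\}(\xi-\eta)|}{(1+\|\eta\|^2)^{s/2}}\,d\eta.
\end{equation}
The weighted convolution bound in Assumption~\ref{assum_f_theta_in_range}(2) then collapses this integral to $|\mathcal{F}\{K\}(\xi)|\cdot|l_{\pi,s,\theta}(\xi)|$, which cancels exactly with the denominator $\mathcal{F}\{K\}$ in the quotient. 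After the cancellation, Assumption~\ref{assum_f_theta_in_range}(3) absorbs $|\mathcal{F}\{K_\theta\}(\xi)|$ by the uniform bound $c_K$, leaving the pointwise estimate
\begin{equation}
\left|\frac{\mathcal{F}\{K_\theta\}(\xi)\cdot\mathcal{F}\{\bar f\pi^{\theta/2}\}(\xi)}{\mathcal{F}\{K\}(\xi)}\right| \le c_K\, c_{f,s}\, |l_{\pi,s,\theta}(\xi)|.
\end{equation}

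Taking the $L^2(\mathbb{R}^d)$-norm and invoking Plancherel once more yields both claims at the same time: the right-hand side is finite because $l_{\pi,s,\theta} \in L^2$, so $L_K^{-1} f_\theta \in L^2_\rho$ and hence $f_\theta \in \operatorname{Range}(L_K)$, while the bound itself reads $\|L_K^{-1}f_\theta\|_{L^2_\rho} \le c_K\, c_{f,s}\,\|l_{\pi,s,\theta}\|_{L^2}$. The main obstacle is not an analytic one but a bookkeeping one: I need to be careful that the cancellation of $\mathcal{F}\{K\}$ in the denominator is legitimate pointwise a.e., which implicitly requires $\mathcal{F}\{K\}$ to be nonzero wherever the numerator is, and to confirm that the implicit finiteness of $\|l_{\pi,s,\theta}\|_{L^2}$ is what makes the whole chain of bounds non-vacuous. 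Everything else is a clean one-line application of Young/convolution inequalities and Parseval.
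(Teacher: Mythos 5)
Your proposal is correct and follows essentially the same route as the paper's proof: express $\mathcal{F}\{\bar f\pi^{\theta/2}\}$ as the convolution $\mathcal{F}\{f\}\star\mathcal{F}\{\pi^\theta\}$, apply parts (1)--(3) of Assumption~\ref{assum_f_theta_in_range} in the same order to obtain the pointwise bound $c_K c_{f,s}|l_{\pi,s,\theta}(\xi)|$ on the Fourier-side quotient, and conclude via Plancherel that the quotient (hence $L_K^{-1}f_\theta$) lies in $L^2$ with the stated norm bound. Your added remark about the change of measure $d\rho=\pi^\theta\,dx$ and the a.e.\ nonvanishing of $\mathcal{F}\{K\}$ only makes explicit what the paper leaves implicit.
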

\begin{proof}
Given $\bar{f} = f \pi^{\theta/2}$, the functions $\bar{f}\pi^{\theta/2} = f\pi^\theta $ almost everywhere, hence, 
\begin{align}
\nonumber
\left| {\cal F} \{ \bar{f} \pi^{\theta/2}\}(\xi )\right| = & \left|{\cal F} \{ f \pi^\theta \}(\xi )\right| \\
\nonumber
=&\left|{\cal F} \{ f\} \star {\cal F}\{ \pi^\theta \}(\xi )\right| \\
\nonumber
=& \left| \int_{\mathbb{R}^d} {\cal F} \{ f\} (\eta) {\cal F} \{ \pi^\theta \}(\xi -\eta) d\eta  \right| \\
\nonumber
\leq & \int_{\mathbb{R}^d} \left|{\cal F} \{ f\} (\eta) \right| \cdot \left|{\cal F} \{ \pi^\theta \}(\xi -\eta)\right| d\eta   \\
\leq & \int_{\mathbb{R}^d} \frac{c_{f, s}}{(1+ |\xi|^2)^{s/2}} \left|{\cal F} \{ \pi^\theta \}(\xi -\eta)\right| d\eta \label{1017_1} 
\end{align}
where~\eqref{1017_1} holds from Assumption~\ref{assum_f_theta_in_range} (1). By moving $c_{f, s}$ to the left of~\eqref{1017_1}, 
\begin{align}
\frac{1}{c_{f,s}} \left|{\cal F}\{\bar{f}\pi^{\theta/2} \}(\xi)\right| \leq& \int_{\mathbb{R}^d} \frac{1}{(1+ |\xi|^2)^{s/2}} \left|{\cal F} \{ \pi^\theta \}(\xi -\eta)\right| d\eta 
 \leq  \left|{\cal F}\{K\}(\xi)\right| \cdot \left| l_{\pi, s, \theta}(\xi)\right|,\label{1020_3}
\end{align}
for a function $l_{\pi, s, \theta} \in L^2$, where~\eqref{1020_3} holds from Assumption~\ref{assum_f_theta_in_range} (2). Then,
$\left|\frac{{\cal F} \{\bar{f}\pi^{\theta/2}\}}{{\cal F}\{K\}}(\xi) \right| \leq \left|l_{\pi, s, \theta}(\xi)\right| \cdot c_{f, s}$.
From Assumption~\ref{assum_f_theta_in_range} (3), it follows
\begin{align}
\nonumber
 & \left \lVert \frac{\mathcal{F}\{K_{\theta}\}\cdot \mathcal{F}\{\bar{f}\pi^{\theta/2}\}}{{\cal F}\{K\}}\right \rVert_{L^2} 
 \leq \sup_{\xi \in \mathbb{R}^d} \left| {\cal F}\{ K_{\theta}(\xi)\}\right| \cdot \left \lVert \frac{ \mathcal{F}\{\bar{f}\pi^{\theta/2}\}}{{\cal F}\{K\}}\right \rVert_{L^2} 
 \leq  c_{K}c_{f,s} \| l_{\pi, s, \theta}(\xi) \|_{L^2}  
\end{align}
From Plancherel theorem, the above is equivalent to ${\cal F}^{-1}\left\{ \frac{\mathcal{F}\{K_{\theta}\}\cdot \mathcal{F}\{\bar{f}\pi^{\theta/2}\}}{{\cal F}\{K\}} \right\} < \infty$, that is,~\eqref{1016_9} holds, which implies $L_K^{-1} L_{K_\theta} f \in L^2$ and $f_\theta \in \operatorname{Range}(L_K)$. This completes the proof. 
\end{proof}
 
\subsubsection{$\lim_{\theta \to 0} \|f_{\theta} - f \|_{\infty} = 0$}\label{sec_f_theta_limit}
\begin{prop}\label{prop_f_theta_lim}
    If Assumption~\ref{assum_f_theta_lim} holds, then $\lim_{\theta \to 0} \|f_{\theta} - f \|_{\infty} = \epsilon$
\end{prop}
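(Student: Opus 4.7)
The plan is to show that for every $\epsilon>0$, once $\theta$ is taken small enough to satisfy Assumption~\ref{assum_f_theta_lim}, the sup-norm $\|f_\theta-f\|_\infty$ is at most $\epsilon$, by splitting a convolution-type integral into a ``far'' and a ``near'' piece and bounding each on the order of $\epsilon/3$.

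The first step is to exploit the unit-mass normalization from Assumption~\ref{assump_general_K_theta} (so that $\int K_\theta = 1$ under the defining scaling) to rewrite
\begin{equation}
f_\theta(x) - f(x) = \int_{\mathbb{R}^d} K_\theta(x-y)\bigl[f(y)\pi(y)^\theta - f(x)\bigr]\, dy,
\end{equation}
for every $x\in\textnormal{supp}\,f$, and then split the integration region at radius $\eta_\theta$. On the far region $\{y:|y-x|>\eta_\theta\}$, the integrand is uniformly bounded by $M_f(1+M_\pi^\theta)\le 2M_fM_\pi^\theta$, so Assumption~\ref{assum_f_theta_lim}(1) immediately yields that this contribution is at most $\epsilon/3$.

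For the near region $\{y:|y-x|\le\eta_\theta\}$, I would use the algebraic decomposition
\begin{equation}
f(y)\pi(y)^\theta - f(x) = \bigl[f(y)-f(x)\bigr]\pi(y)^\theta + f(x)\bigl[\pi(y)^\theta-\pi(x)^\theta\bigr] + f(x)\bigl[\pi(x)^\theta-1\bigr],
\end{equation}
and handle the three terms separately. The first two are controlled by the local Lipschitz-type bounds $|f(y)-f(x)|\le L_f\eta_\theta$ and $|\pi(y)^\theta-\pi(x)^\theta|\le L_\pi\eta_\theta$ from Assumption~\ref{assum_f_theta_lim}(2), pulling out the integrability bound $\int|K_\theta|\le M_K$ from (3); both contributions vanish as $\eta_\theta\to 0$. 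The third term is where Assumption~\ref{assum_f_theta_lim}(4) enters essentially: the choice of $\theta\le\min\{\log(1-\epsilon/(3M))/\log m_\pi,\ \log(1+\epsilon/(3M))/\log M_\pi\}$ gives $1-\epsilon/(3M)\le m_\pi^\theta\le\pi(x)^\theta\le M_\pi^\theta\le 1+\epsilon/(3M)$, whence $|\pi(x)^\theta-1|\le\epsilon/(3M)$ and therefore $|f(x)|\cdot|\pi(x)^\theta-1|\le\epsilon/3$. Summing the near and far estimates delivers $\|f_\theta-f\|_\infty\le\epsilon$, and since $\epsilon>0$ was arbitrary the claim follows.

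The main obstacle is that the ``Lipschitz constant'' $L_\pi$ in Assumption~\ref{assum_f_theta_lim}(2) is for the $\theta$-dependent function $\pi^\theta$ rather than $\pi$ itself, so one has to verify that $L_\pi$ can be taken independent of $\theta$ as $\theta\to 0$; I would resolve this by the mean-value identity $\pi(y)^\theta-\pi(x)^\theta=\theta\,\pi(\xi)^{\theta-1}\log\pi(\xi)\,(\pi(y)-\pi(x))$ and the two-sided bound $\pi\in[m_\pi,M_\pi]$, which keeps the prefactor uniformly finite. A secondary bookkeeping issue is that the three smallness conditions — the tail bound in (1), the Lipschitz regime in (2), and the logarithmic window in (4) — must be satisfied simultaneously, so one must pick $\theta$ (and therefore $\eta_\theta$) small enough for all three; this compatibility, rather than any individual estimate, is the real subtlety of the argument.
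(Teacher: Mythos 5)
Your proof is correct and follows essentially the same route as the paper's: both split the convolution error at radius $\eta_\theta$, bound the far piece via Assumption~\ref{assum_f_theta_lim}(1), the near piece via the Lipschitz bounds in (2) together with $\int|K_\theta|\le M_K$ from (3), and the $|f(x)|\,|1-\pi(x)^\theta|$ contribution via the logarithmic window in (4), each at order $\epsilon/3$. The only difference is a minor algebraic rearrangement (you subtract $f(x)$ inside the integral where the paper subtracts $f(x)\pi(x)^\theta$ and handles $|f(x)||1-\pi(x)^\theta|$ outside the integral), and your closing remarks about the $\theta$-uniformity of $L_\pi$ and the simultaneous satisfaction of the three smallness conditions are fair observations that the paper leaves implicit.
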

\begin{proof}
    From the definition of $f_\theta$, 
    \begin{align}
    \nonumber
        &\left|(L_{K_\theta}f -f)(x)\right| \\
        \nonumber
        = & \left|  \int_{\mathbb{R}^d}  K_{\theta}(x-y)\, f(y)(\pi(y))^\theta dy - f(x)\right| \\
        \nonumber
        \leq & \left|  \int_{\mathbb{R}^d}  K_{\theta}(x-y)\, f(y)(\pi(y))^\theta dy - f(x)(\pi(x))^\theta \right|  + \left|f(x)\right| \left|1-(\pi(x))^\theta \right| \\
        \leq & \int_{\mathbb{R}^d}  \left|K_{\theta}(x-y)\right| \left| f(y)(\pi(y))^\theta - f(x)(\pi(x))^\theta \right|dy   + \left|f(x)\right| \left|1-(\pi(x))^\theta \right| \label{eq67} \\
        \nonumber
        \leq & \int_{|y-x| > \eta_\theta}  \left|K_{\theta}(x-y)\right| \left| f(y)(\pi(y))^\theta - f(x)(\pi(x))^\theta \right|dy \\
        & + \int_{|y-x| \leq \eta_\theta}\!\!\!\!\!\!\!\left|K_{\theta}(x-y)\right| \left| f(y)(\pi(y))^\theta - f(x)(\pi(x))^\theta \right|dy   + \left|f(x)\right| \left|1-(\pi(x))^\theta \right|,\label{1020_1} 
    \end{align}
    where~\eqref{eq67} holds from $\int_{\mathbb{R}^d}K(x)dx = 1$ in Assumption~\ref{assump_general_K_theta}.
We start with the first term in~\eqref{1020_1}. From Assumption~\ref{assum_f_theta_lim} (1), it follows 
\begin{align}
\nonumber
    & \int_{|y-x| > \eta_\theta}  \left|K_{\theta}(x-y)\right| \left| f(y)(\pi(y))^\theta - f(x)(\pi(x))^\theta \right|dy \\
    \nonumber
     < & \frac{\epsilon}{6M_f M_\pi^\theta}\cdot 2M_fM_\pi^\theta \\
    = & \epsilon/3.\label{1020_condi1}
\end{align}
 
For the second term in~\eqref{1020_1}, note that from Assumption~\ref{assum_f_theta_lim} (2), 
\begin{align}
\nonumber
    & \left| f(y)(\pi(y))^\theta - f(x)(\pi(x))^\theta \right| \\
    \nonumber
    \leq  & \left| f(y)(\pi(y))^\theta - f(x)(\pi(y))^\theta \right| + \left| f(x)(\pi(y))^\theta - f(x)(\pi(x))^\theta \right| \\
    \nonumber
    \leq & \left|\pi(y)^\theta\right| \left|f(y) -f(x)\right| + \left|f(x)\right|\cdot \left|(\pi(y))^\theta - (\pi(x))^\theta\right|\\
    \leq & (L_f M_\pi^\theta + M_f L_\pi ) \eta_\theta. \label{1020_2}
\end{align}
From~\eqref{1020_2} and Assumption~\ref{assum_f_theta_lim} (3), it follows that $\int_{|y-x| \leq \eta_\theta}  \!\!\! \left|K_{\theta}(x-y)\right| \left| f(y)(\pi(y))^\theta - f(x)(\pi(x))^\theta \right|dy \leq M_K (L_f M_\pi^\theta + M_f L_\pi ) \eta_\theta$. Hence, when $\eta_\theta \leq \frac{\epsilon}{3M_K(L_f M_\pi + M_f L_\pi)}$, 
\be\label{1020_condi2}
\int_{|y-x| \leq \eta_\theta}\!\!\!  \!\!\!\left|K_{\theta}(x-y)\right| \left| f(y)(\pi(y))^\theta \!- \!f(x)(\pi(x))^\theta \right|dy \leq \! \frac{\epsilon}{3}
\ee
We now consider the third term in~\eqref{1020_1}. The assumption for $\theta$ in Assumption~\ref{assum_f_theta_lim} (4) yields that $(1-\epsilon/(3M))^{1/\theta} \leq m_\pi$ and $(1+\epsilon/(3M))^{1/\epsilon} \geq M_\pi$, that is 
\be\label{eqq}
(1-\epsilon/(3M))^{1/\theta} < \pi(x) < (1+\epsilon/(3M))^{1/\theta},
\ee
given $|\pi(x)| \in [m_\pi, M_\pi], \forall x \in \textnormal{supp} f$. Equivalently, rewriting~\eqref{eqq} as 
$\left|1-(\pi(x))^\theta\right| < \frac{\epsilon}{3M}$.
Therefore, Assumption~\ref{assum_f_theta_lim} (4) ensures
\be\label{1020_condi3}
\left|f(x)\right| \left|1-(\pi(x))^\theta \right| \leq M \left|1-(\pi(x))^\theta \right| < \frac{\epsilon}{3}.
\ee
Therefore,~\eqref{1020_condi1}~\eqref{1020_condi2}~\eqref{1020_condi3} indicates~\eqref{1020_1} is upper bounded by $\epsilon$, that is,
$\left|L_{K_\theta}f - f\right|_\infty \leq \epsilon$.
This completes the proof. 
\end{proof}

In this following, we provide the proof for Theorem~\ref{theorem_error_bound_KRR}. 

\begin{proof}
    From Proposition~\ref{prop_f_theta_in_range} and Proposition~\ref{prop_f_theta_lim}, $\forall \epsilon > 0$, $\exists \theta_\epsilon \in (0, 1)$, such that $f_{\theta_\epsilon}$ defined in~\eqref{eq_def_f_theta} is such that $f_{\theta_\epsilon} \in \operatorname{Range}(L_K)$ and
    $\lim_{\theta \to 0} \|f_{\theta_\epsilon} - f \|_{\infty} = \epsilon$. From Theorem~\ref{theorem_error_bound}, 
    \be
    \|f_{\theta_\epsilon} - \hat{f}_{\theta_\epsilon}^{\textnormal{KRR} }\|_\infty \leq 4 \log\!\left(\frac{2}{\delta}\right)
(3 \kappa M)^{\frac{1}{3}} \| L^{-1}_{K} L_{K_{\theta_\epsilon}}f \|_{L_\rho^2} m^{-\frac{1}{6}},
    \ee
    where $\| L^{-1}_{K} L_{K_{\theta_\epsilon}}f \|_{L_\rho^2} \leq c_{K} c_{f, s} \|l_{\pi, s, \theta }\|_{L^2}, l_{\pi, s, \theta} \in L^2(\mathbb{R}^d)$ from Proposition~\ref{prop_f_theta_in_range}. Therefore,  
    \be
    \|f_{\theta_\epsilon} - \hat{f}_{\theta_\epsilon}^{\textnormal{KRR} }\|_\infty \leq 4 \log\!\left(\!\frac{2}{\delta}\!\right)\!
(3 \kappa M)^{\frac{1}{3}} c_{K} c_{f, s} \|l_{\pi, s, \theta }\|_{L^2}  m^{-\frac{1}{6}}.
    \ee
    The error in~\eqref{eq_total_error} is as given in~\eqref{eq_KRR_bound}.
This completes the proof. 
\end{proof}

\subsection{Generalization Error}\label{sec_error2}
The probabilistic guarantee on the generalization error of OC-SVM was given in~\cite{OCSVM_generalization_error}.
\begin{theorem}
    Suppose that the OC-SVM gives an optimal solution $(M^*, \rho^*)$ with $\rho^* > 0$ for all $1 \leq i \leq m$. Then for any $\delta \in (0, 1)$ and $\epsilon > 0 $, with probability $1-\delta$ (over the choice of samples size $m$), we have
    \be
    \nonumber
    \mathbb{P}[\<M, \Gamma\> < \rho^* - \epsilon] \leq \frac{2}{m}\left(\left\lceil \log \tau(\|N\|_F, \epsilon) \right\rceil + \frac{2}{\delta}
\right),
    \ee
    where $\tau(\varrho, \epsilon, 2m)$ is the covering number of the model space $\{\<N, \cdot\>: \|N\|_F \leq \varrho\}$ by balls of radius of $\epsilon$  under the metric $d(N_1, N_2) = \textnormal{sup}_{\Gamma_1, \cdots, \Gamma_{2m}} \max_{i = 1, \cdots, 2m} |\<N_1 - N_2, \Gamma_i\>|$.
\end{theorem}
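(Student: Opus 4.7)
The plan is to prove this tail inequality via the classical ghost-sample symmetrization argument combined with a covering-number union bound, which is the canonical route for OC-SVM generalization bounds of the Sch\"olkopf--Platt--Shawe-Taylor--Williamson type. Let $\eta = \frac{2}{m}\bigl(\lceil \log\tau\rceil + \log(2/\delta)\bigr)$ denote the target tail bound (reading the ``$2/\delta$'' term in the statement as $\log(2/\delta)$ in the standard log-confidence sense). I aim to show that with probability at least $1-\delta$ over the training sample, $\mathbb{P}\bigl[\langle M^*, \Gamma\rangle < \rho^* - \epsilon\bigr] \leq \eta$.

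First I would introduce an independent ghost sample $\Gamma'_1,\ldots,\Gamma'_m$. Conditional on the bad event that the population tail exceeds $\eta$, a standard binomial-lower-tail (Chernoff) estimate shows that with probability at least $1/2$ the ghost sample contains at least $k := \lceil \eta m/2 \rceil$ points with $\langle M^*, \Gamma'_j\rangle < \rho^* - \epsilon$. Hence the probability of the bad event is at most twice that of the joint event in which the training sample has empirical margin $\geq \rho^*$ while the ghost sample contains many ``low-score'' points under $M^*$.

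Second, I would discretize the hypothesis class using the sample-dependent pseudo-metric $d$: let $\mathcal{N}_\epsilon$ be an $\epsilon$-cover of $\{N : \|N\|_F \leq \|M^*\|_F\}$ under $d$ on the pooled $2m$-sample, with $|\mathcal{N}_\epsilon| = \tau$. For the data-dependent $M^*$, pick $\tilde M \in \mathcal{N}_\epsilon$ with $d(M^*, \tilde M) \leq \epsilon$; then $\langle M^*, \Gamma_i\rangle \geq \rho^*$ forces $\langle \tilde M, \Gamma_i\rangle \geq \rho^* - \epsilon$, and $\langle M^*, \Gamma'_j\rangle < \rho^* - \epsilon$ forces $\langle \tilde M, \Gamma'_j\rangle < \rho^*$. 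It then suffices to control, uniformly over fixed $\tilde M$ in the cover, the event that the training half has all scores $\geq \rho^* - \epsilon$ while the ghost half has at least $k$ scores $< \rho^*$.

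Third, I would invoke a permutation argument conditional on the pooled multiset of $2m$ samples. Under this conditioning, the training/ghost split is uniform over the $\binom{2m}{m}$ partitions, so for any fixed $\tilde M$ the probability that all $k$ ``low-score'' points land on the ghost side is at most $\binom{2m-k}{m}/\binom{2m}{m} \leq 2^{-k}$. Taking a union bound over the $\tau$ elements of $\mathcal{N}_\epsilon$ gives total failure probability at most $2\tau\cdot 2^{-k}$; setting this equal to $\delta$ and solving for $k$ yields $k \leq \lceil\log\tau\rceil + \log(2/\delta)$, i.e.\ $\eta \leq \frac{2}{m}\bigl(\lceil\log\tau\rceil + \log(2/\delta)\bigr)$, matching the stated rate. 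The main obstacle I anticipate is the bookkeeping around the sample-dependence of $d$, and hence of the cover $\mathcal{N}_\epsilon$: one cannot fix the cover before sampling. This is handled cleanly by conditioning on the unordered pooled sample, under which $d$, the cover, and the combinatorial $2^{-k}$ bound all become deterministic, so the union bound over the conditional cover of size $\tau$ applies directly.
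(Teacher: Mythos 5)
First, a point of comparison: the paper does not prove this theorem at all---it is imported verbatim (with some typographical damage) from the cited reference on OC-SVM generalization, so there is no internal proof to measure your attempt against. Your sketch reconstructs precisely the canonical double-sample argument underlying that citation: ghost-sample symmetrization, projection of the data-dependent $M^*$ onto a cover in the empirical sup-metric over the pooled $2m$ points, and the swapping-group count $\binom{2m-k}{m}\big/\binom{2m}{m}\leq 2^{-k}$ followed by a union bound over the cover. Your reading of the statement's ``$\tfrac{2}{\delta}$'' as $\log(2/\delta)$ is the correct repair of an evident typo, and your conditioning on the unordered pooled sample is a legitimate way to handle sample-dependence of the cover (though note that with the metric as actually defined in the statement---a supremum over all $2m$-point configurations---the covering number is uniform and hence deterministic, so this device is not even needed).

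There is, however, one step that does not go through as written. After projecting onto an $\epsilon$-cover you obtain training scores $\langle \tilde M,\Gamma_i\rangle \geq \rho^*-\epsilon$ and bad ghost scores $\langle \tilde M,\Gamma_j'\rangle < \rho^*$; these ranges overlap on $[\rho^*-\epsilon,\rho^*)$, so the bad ghost points need not be the $k$ smallest-scoring points of the pooled sample, and the $2^{-k}$ permutation count does not apply to the event as you have stated it. Moreover, your event still mentions $\rho^*$, which is a function of the training half and therefore changes under the very swaps the permutation argument averages over, so the event is not determined by $(\tilde M,\mathrm{split})$ alone. Both defects are cured simultaneously by the standard half-scale device: take the cover at radius $\epsilon/2$, so that training scores satisfy $\langle \tilde M,\Gamma_i\rangle\geq \rho^*-\epsilon/2$ while bad ghost scores satisfy $\langle \tilde M,\Gamma_j'\rangle<\rho^*-\epsilon/2$; the strict separation lets you restate the event threshold-free as ``the $k$ smallest $\tilde M$-scores among the pooled $2m$ points all lie in the ghost half,'' to which the counting bound applies and from which $\rho^*$ has disappeared. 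The price is that the covering number enters at scale $\epsilon/2$ rather than $\epsilon$---a constant-scale discrepancy well within the imprecision of the statement itself. One further small gap: the symmetrization step (ghost sample captures at least $k=\lceil \eta m/2\rceil$ bad points with probability at least $1/2$) requires $m\eta$ bounded below by a modest constant, e.g.\ via a Chernoff condition of the form $m\eta\geq 8$; this deserves a sentence, since for very small $m$ the stated bound is vacuous anyway.
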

In~\cite{locicero2025issues}, the conclusion was translated explicitly
in terms of the dissipativity learning. Specifically,
by the definition of covering number, it could be found that
\[
\mathbb{P}\!\left[ \langle M, \Gamma \rangle < \rho^* \!-\! \epsilon \right]
\le
\frac{1}{m} \,
\mathcal{O}\!\left(
    \log \frac{1}{\delta}
    \! + \! \epsilon^{\frac{1}{2}} \log m
    \! + \! \frac{1}{\epsilon^2} \log \frac{1}{\epsilon}
\right)\!.
\]

\section{Numerical Results}\label{sec_simulation}
For the estimation of the dissipativity parameters $Q$ and $P$, we suppose that $m$ data points $(y^i(\cdot), u^i(\cdot))_{i = 1}^m$ and $(x^i(\cdot))_{i=1}^m$ are sampled, from a distribution of random input excitations such that the magnitude of the input satisfies Gaussian distribution with zero mean with a fixed duration for each magnitude. 
We consider a four-tray binary distillation column where the states are 
$\xv(t)=\begin{bmatrix}x_1(t) & x_2(t) & x_3(t) & x_4(t)\end{bmatrix}^{\!\top}$ with nonlinear dynamics as follows:
\begin{align}  
\nonumber
\dot{x}_1  =& -\frac{V}{H_1}x_1 
+ \frac{V\beta\,x_2}{H_1\!\left(1+(\beta-1)x_2\right)}, \\
\nonumber
\dot{x}_2  =& 
\frac{L}{H_2}(x_1-x_2)
+ \frac{V}{H_2}\!\left(
\frac{\beta\,x_3}{1+(\beta-1)x_3}
- \frac{\beta\,x_2}{1+(\beta-1)x_2}
\right), \\
\nonumber
\dot{x}_3  =&
\!- \frac{F}{H_3}x_3
\!+\! \frac{L}{H_3}(x_2\!-\!x_3)
\!+ \!\frac{V}{H_3} 
\frac{\beta\,x_4}{1+(\beta-1)x_4} - \frac{V}{H_3} \!\frac{\beta\,x_3}{1+(\beta-1)x_3} + \frac{F}{H_3}z_f\!\left(1+\operatorname{sat}(u(t))\right), \\
\nonumber
\dot{x}_4  =&
\frac{F+L}{H_4}(x_3-x_4)
+ \frac{V}{H_4}\!\left(
x_4 - \frac{\beta\,x_4}{1+(\beta-1)x_4}
\right), 
\end{align}
where 
$\operatorname{sat}(u) \eqdef \max\{-0.5,\min\{0.5,u(t)\}\}$.
The measured outputs are the top and bottom compositions
$y(t)=\begin{bmatrix} x_1(t) &x_4(t) \end{bmatrix}^{\!\top}$.
The parameters are listed in Table~\ref{table}. 
\begin{table}
\centering
\caption{Parameters of the distillation column.}
\label{table}
\begin{tabular}{lcl}
\toprule
Parameter & Symbol & Value \\ 
\midrule
Vapor flowrate & $V$ & 6.05 (mol/min) \\ 
Reflux flowrate & $L$ & 4.79 (mol/min) \\ 
Feed flowrate & $F$ & 1.70 (mol/min) \\ 
Relative volatility constant & $\beta$ & 1.60 \\ 
Nominal feed composition & $z_f$ & 0.56 \\ 
Liquid holdup & $H_1$ & 5.25 mol\\ 
& $H_2$ & 0.53 mol \\ 
& $H_3$ & 0.53 mol \\ 
& $H_4$ & 5.25 mol \\ 
\bottomrule
\end{tabular}
\end{table}
 
\subsection{Numerical stability of nonparametric and parametric dissipativity learning}
In the nonparametric RKHS formulation, we consistently observe that as \(\lambda\) increases, 
the objective value and the Hilbert--Schmidt norms \(\|\Pi\|_{\mathrm{HS}}\) and 
\(\|P\|_{\mathrm{HS}}\) grow monotonically, as shown in the top panel of 
Fig.~\ref{fig:objnorm_comparison}. This behavior aligns with the structure of 
\eqref{eq_def_nonparam_obj}, where a larger \(\lambda\) imposes a stronger penalty on the term 
\(\lambda \alpha\). The increased penalty drives the optimizer to reduce \(\alpha\), which in turn 
tightens the gain-like constraint 
\(-\Pi - \Upsilon_0 + \alpha \Upsilon_1 \succeq 0\) 
and enforces a stricter gain bound. As a result, the feasible set for \(\Pi\) contracts, and the 
optimizer compensates by increasing the magnitude of \(\Pi\), leading to a systematic growth in 
its Hilbert--Schmidt norm. 

Similarly, the norm of \(P\) also increases systematically with \(\lambda\). 
This trend is coupled with the variation of \(\Pi\) through the sample-wise dissipativity 
constraints 
\(\langle \Pi, \Gamma_t \rangle + \langle P, \Delta_t \rangle \ge \rho\). 
When \(\alpha\) decreases and \(\Pi\) becomes more negative, the term 
\(\langle \Pi, \Gamma_t \rangle\) generally decreases for most samples. 
To preserve the feasibility of the dissipativity inequalities, the optimizer compensates by 
increasing the contribution from \(P\), so that 
\(\langle P, \Delta_t \rangle\) grows accordingly. 
This adjustment results in a consistent increase in the Hilbert--Schmidt norm 
\(\|P\|_{\mathrm{HS}}\) as \(\lambda\) increases. 
In other words, both \(\|\Pi\|_{\mathrm{HS}}\) and \(\|P\|_{\mathrm{HS}}\) must increase jointly to 
satisfy the tightened dissipativity constraints under a stricter gain bound. 
Consequently, the overall objective value and the operator norms exhibit a monotone 
nondecreasing trend with respect to \(\lambda\), as illustrated in 
Fig.~\ref{fig:objnorm_comparison}. 
Furthermore, as expected, increasing \(\lambda\) enforces a stricter gain bound, reflected by the 
smaller optimal values of \(\alpha\) shown in the top panel of 
Fig.~\ref{fig:rhoalpha_comparison}.

\begin{figure}[htbp]
	\centering
	\includegraphics[width=0.8\textwidth]{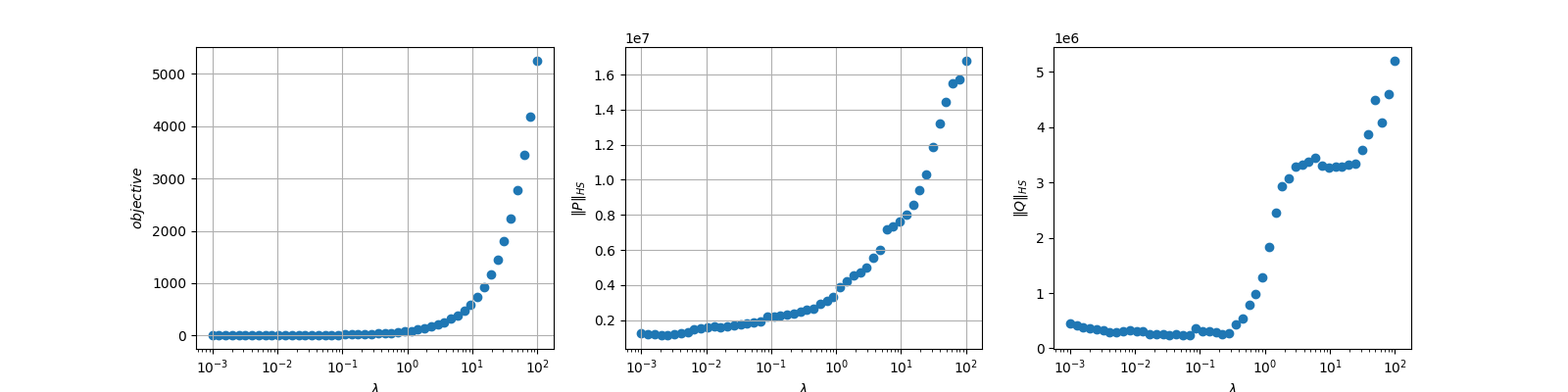}
	\vspace{1em} 
	\includegraphics[width=0.8\textwidth]{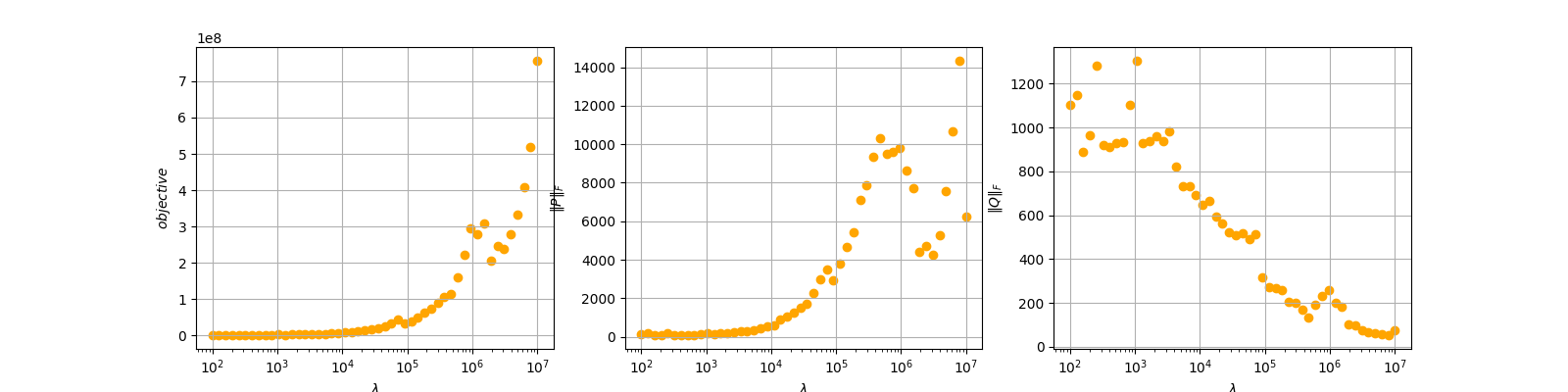}
	\caption{Comparison of the objectives of nonparametric dissipativity learning in RKHS (top) and parametric learning (bottom).}
	\label{fig:objnorm_comparison}
\end{figure}
 
\begin{figure}[htbp]
	\centering
	\includegraphics[width=0.8\textwidth]{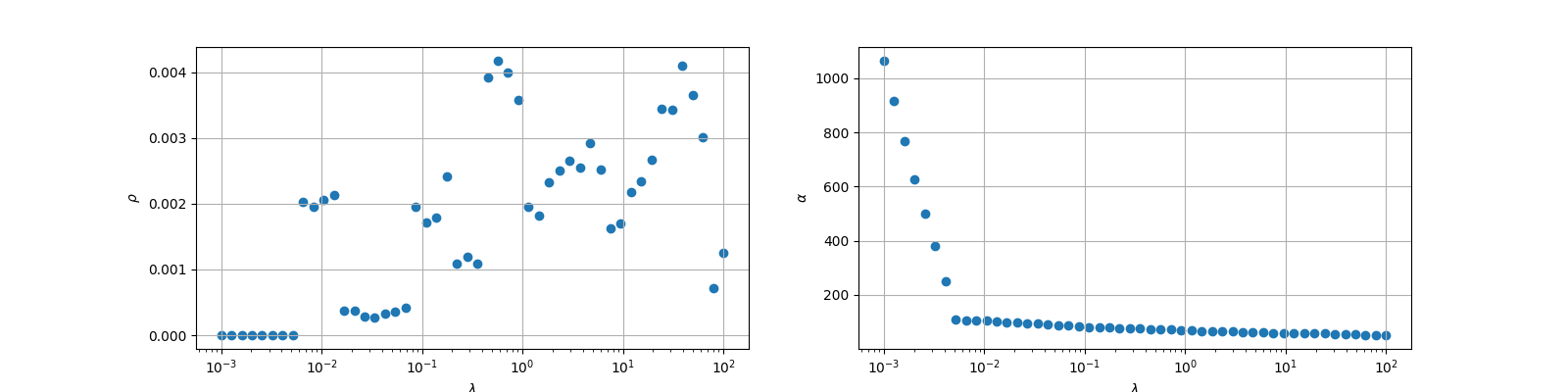}
	\vspace{1em} 
	\includegraphics[width=0.8\textwidth]{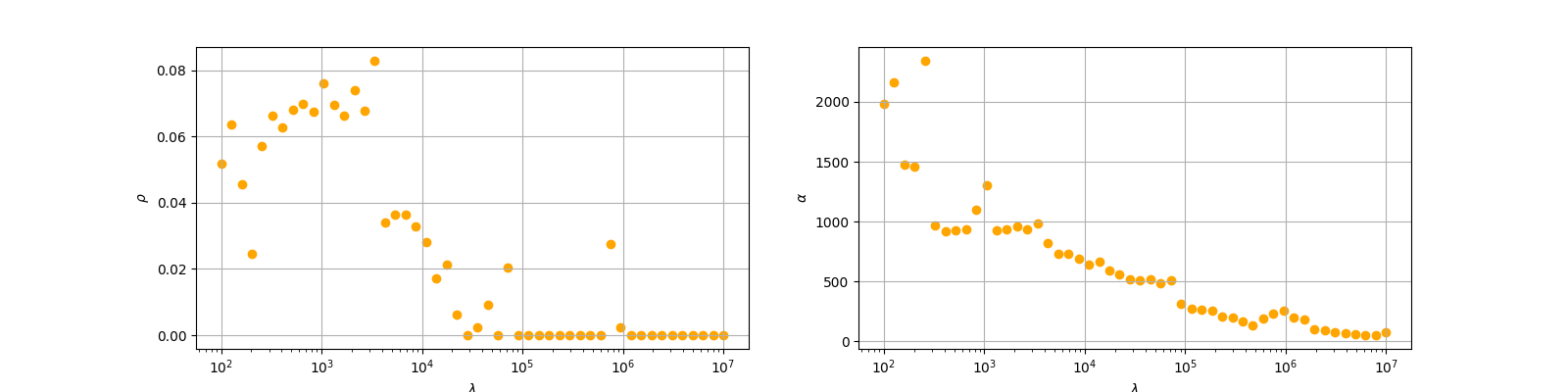}
	\caption{Comparison of the learned dissipativity margin $\rho$ and $\alpha$ from nonparametric dissipativity learning in RKHS (top) and parametric learning (bottom), with respect to weighting parameter $\lambda$.}
	\label{fig:rhoalpha_comparison}
\end{figure}
Similarly, in the parametric learning setting, the same monotonicity property holds. The objective 
function is convex and nondecreasing with respect to \(\lambda\). 
As in the nonparametric case, increasing \(\lambda\) penalizes large values of \(\alpha\), thus enforcing a stricter dissipativity gain bound.
As \(\lambda\) increases, it leads to a reduction in 
\(\alpha\) and thereby small feasible region of \(Q\). This is enforced by the constraint in~\eqref{eq_obj_parametric}, i.e. $Q \prec  \begin{bmatrix}
\alpha  & 0 \\
0 & - I
\end{bmatrix}$. To maintain the feasibility, the matrix $Q$ is pushed away from the origin in the matrix space, leading to an increase in its Frobenius norm. This reduces the term $\langle Q, \Gamma_t \rangle $. To preserve feasibility and maintain the dissipativity margin \(\rho\) in \(\langle Q, \Gamma_t \rangle + \langle P, \Delta_t \rangle \ge \rho\), the optimizer compensates 
by increasing \(\langle P, \Delta_t \rangle\) across samples. 
Achieving this compensation requires \(P\) to have larger eigenvalues in magnitude, 
thereby increasing its Frobenius norm. 
In other words, as the gain constraint on \(Q\) becomes stricter, \(P\) must amplify its effect to 
balance the tightened inequalities, leading to a systematic growth in \(\|P\|_{F}\) as 
\(\lambda\) increases too. This analysis is shown numerically in the bottom panel of Fig.~\ref{fig:objnorm_comparison}. It is worth noting that the nonparametric learning formulation exhibits better numerical stability in comparison with its parametric counterpart.

The verification for the dissipativity learning is shown in Fig.~\ref{fig:verification_parametric}, which evaluates the learned 
dissipativity margin on testing data. For each weighting parameter
\(\lambda\), the sample-wise dissipativity margin was computed using the learned matrices 
\(P\) and \(Q\), normalized by \(\|P\|_F + \|Q\|_F\).
The mean and variance of these normalized margins are shown as error bars, together with the 
learned dissipativity margin \(\rho\) obtained from the training data.
The results confirm that the dissipativity inequality remains valid on the verification data for 
a wide range of \(\lambda\), as the test margin remains positive and closely follow the 
trend as the learned margin on training data. As \(\lambda\) increases, the test margin decreases slightly, 
reflecting the expected trade-off between conservativeness and data fit. A larger \(\lambda\) 
enforces a stricter gain bound and increases the magnitudes of \(P\) and \(Q\), 
leading to a more conservative but stable certificate. The relatively small standard deviations 
indicate that the learned quadratic storage and supply matrices generalize well to unseen data. 
Overall, the verification demonstrates that the proposed approach yields a 
consistent dissipativity characterization that is robust with respect to both the regularization 
parameter \(\lambda\) and unseen data.

\begin{figure}[htbp]
    \centering
    \includegraphics[width=0.7\textwidth]{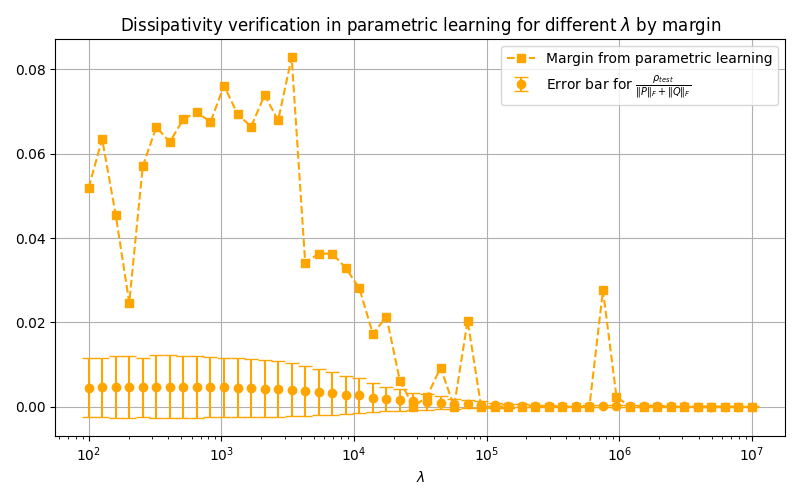}
    \caption{Dissipativity verification for parametric learning.}
    \label{fig:verification_parametric}
\end{figure}

\section{Conclusion}\label{sec_conclusion}
This paper introduced a nonparametric framework for dissipativity learning in RKHS by casting the problem in an OC-SVM formulation. By leveraging representer properties, we reduced the infinite\textendash dimensional operator problem to a finite\textendash dimensional convex problem expressed solely through kernel Gram matrices, thereby retaining expressiveness while ensuring tractability. We further provided a principled error analysis that separates (i) the kernel regression approximation error used to construct reference operators and (ii) the OC\textendash SVM generalization error on the empirical margin, yielding population\textendash level guarantees on the learned dissipativity certificates and the implied $\mathcal{L}_2$ gain bound.

Numerical experiments on a nonlinear distillation column demonstrate that the RKHS formulation reliably identifies dissipative behavior directly from input--output trajectories, generalizes to unseen data, and exhibits favorable numerical stability relative to a parametric baseline. As the tradeoff parameter $\lambda$ increases, we observed a monotone, nondecreasing trend in the objective and operator norms, consistent with the tightening of the gain\textendash like constraint. Concurrently, the verified margins indicates the robustness of the approach.

\bibliographystyle{IEEEtran}
\bibliography{bib}

\end{document}